\def\phi{\varphi}
\def\({\left(}
\def\){\right)}
\def\b0{{\mathbf{0}}}
\newtheorem{Lemma}{Lemma}
\newtheorem{Proposition}{Proposition}
\newtheorem{Remark}{Remark}
\newtheorem{Example}{Example}
\definecolor{LatestRevision}{rgb}{0.32, 0.18, 0.5}
\title{Resource Management for Low-latency Cooperative Fine-tuning of Foundation Models at the Network Edge}
\author{
    Hai Wu, Xu Chen, and Kaibin Huang
    \thanks{
        Part of this article has been accepted by the 2024 IEEE International Conference on Communications in China Workshops.
        The authors are with the Department of Electrical and Electronic Engineering, The University of Hong Kong, Hong Kong SAR (Email: \{wuhai, chenxu, huangkb\}@eee.hku.hk). 
        Corresponding author: K. Huang.
    }
}
\newcommand{\removelatexerror}{\let\@latex@error\@gobble}
\begin{document}

\maketitle
\vspace{-10pt}
\begin{abstract}

The emergence of large-scale foundation models (FoMo's) that can perform human-like intelligence motivates their deployment at the network edge for devices to access state-of-the-art \emph{artificial intelligence} (AI).
For better user experiences, the pre-trained FoMo's need to be adapted to specialized downstream tasks through fine-tuning techniques.
To transcend a single device's memory and computation limitations, we advocate multi-device cooperation within the \emph{device-edge cooperative fine-tuning} (DEFT) paradigm, where edge devices cooperate to simultaneously optimize different parts of fine-tuning parameters within a FoMo.
The edge server is responsible for coordination and gradient aggregation.
However, the parameter blocks reside at different depths within a FoMo architecture, leading to varied computation latency-and-memory cost due to gradient backpropagation-based calculations. 
The heterogeneous on-device computation and memory capacities and channel conditions necessitate an integrated \emph{communication-and-computation} (C$^2$) allocation of local computation loads and uplink communication resources to achieve \emph{low-latency} (LoLa) DEFT.
To this end, we consider the depth-ware DEFT block allocation problem.
The involved optimal block-device matching is tackled by the proposed low-complexity \emph{Cutting-RecoUNting-CHecking} (CRUNCH) algorithm, which is designed by exploiting the monotone-increasing property between block depth and computation latency-and-memory cost.
Next, the joint bandwidth-and-block allocation (JBBA) makes the problem more sophisticated, i.e., mathematically NP-hard. 
We observe a splittable Lagrangian expression through the transformation and analysis of the original problem, where the variables indicating device involvement are introduced to decouple the block and bandwidth allocation.
Then, the \emph{dual ascent} method is employed to tackle the JBBA problem iteratively. 
Within each iteration, block allocation and bandwidth allocation are optimized concurrently.
The optimal block allocation sub-problem is solved efficiently by applying the Hungarian method facilitated by the proposed CRUNCH algorithm.
On the other hand, the bandwidth allocation sub-problem is solved in closed form, shedding light on favorable allocations to resource-limited devices. 
Through extensive experiments conducted on the GLUE benchmark, our results demonstrate significant latency reduction achievable by LoLa DEFT for fine-tuning a RoBERTa model.

% \begin{comment}
% \begin{IEEEkeywords}
% \end{IEEEkeywords}  
% \end{comment}

\end{abstract}

\section{Introduction}
\label{sec: introduction}

The remarkable advancements in \emph{Artificial Intelligence} (AI), particularly the human-like intelligence as demonstrated by large-scale \emph{foundation models} (FoMo's), have drawn significant investments and research efforts from leading ICT companies such as Microsoft, Alphabet, and Nvidia \cite{Foundation_Model_2021, GPT-3_2020}.
In particular, ChatGPT, a popular generative AI conversation model, attracts over 100 million active users per week \cite{ChatGPT}.
On the other hand, in the era of mobile networks, their \emph{sixth-generation} (6G) is expected to fully integrate AI in various applications, particularly capitalizing on the surge of FoMo's \cite{Toward_Edge_2020, EAI_6G_2022, Edge_AI_2020}.
To enhance the performance of pretrained FoMo's, fine-tuning is widely embraced as a small fraction of model parameters using small datasets, thereby maximizing their utility for the desired downstream tasks \cite{IDE_2021}.
Especially, relevant techniques, collectively called \emph{parameter-efficient fine-tuning} (PEFT), are capable of achieving comparable performance as full-model training by adapting less than 1\% parameters in pretrained FoMo \cite{LoRA, P-tuning_2023}.
Their high efficiencies have led to PEFT being widely adopted in the on-device tuning of FoMo, e.g., Apple Intelligence \cite{Apple_intellignce}.
However, the distributed deployment of fine-tuning in a wireless network to leverage mobile data is still hindered by the high dimensionality of FoMo parameters, the heterogeneous computation-storage capacities of edge devices, and limited radio resources \cite{LLM_meet_6g_2024, Big_AI_model, DEFT_2023},
To address the issue, we propose a novel resource management framework for a multi-device cooperative fine-tuning paradigm that distributes computation loads to devices with awareness of their heterogeneous capabilities for computation, memory, and communication to expedite the adaptation of a FoMo to a specific task.

In wireless mobile networks, there are two approaches to orchestrate the fine-tuning of FoMo's with the cooperation among devices and servers.
One involves uploading local data for \emph{centralized fine-tuning} at the server.
The other approach, called \emph{device-edge fine-tuning} (DEFT) as considered in this work, leverages distributed computational resources and data at devices cooperatively update a FoMo \cite{DEFT_2023, LLM_federated_2024}.
The former circumvents devices' computation bottlenecks by centralizing model fine-tuning at edge servers with powerful AI hardware.
Subsequently, task-oriented FoMo's can be downloaded to devices to enable matching on-device intelligent applications \cite{MBA}.
However, extensive data uploading introduces significant communication delay and raises privacy concerns due to the exposure of raw personal data \cite{DEFT_2023, MEC_survey_2023}.
The server may not be able to guarantee timely and accurate online fine-tuning when requesting devices are many while increasingly powerful on-device processors remain underutilized.
DEFT, to be deployed on an edge-computing platform, can address the drawbacks of centralized fine-tuning by ensuring personal data security and relying on local computation \cite{DEFT_2023}.
It also eases network congestion by avoiding data uploading and bringing computing to the proximity of user data.
Additionally, such proximity allows flexible fine-tuning of FoMo's to adapt to users' changing contexts and applications.
FoMo's cooperatively updated by multiple devices are automatically shared among them without explicit downloading.
Their limitations in computation and communication are also surpassed in the cooperative process.

DEFT is considered a particularization of \emph{mobile edge computing} (MEC), which leverages advanced mobile processors to process the data closest to their sources \cite{MEIC_2020, MEC_survey_2023}.
There exists a rich and mature literature investigating communication-and-computation (C$^{2}$) resources management of MEC in wireless networks, primarily focusing on efficiently offloading computation tasks to edge clouds to enhance communication/computation rates and mobile energy efficiencies \cite{MEC_Energy_5G_2016, MEC_Energy_2017, MEC_Rate_Maximization_2018}.
Subsequently, MEC research was expanded to include AI inference, training, and sensing tasks \cite{Pushing_AI_2023}.
Cooperative transmission of data for split edge inference/sensing \cite{MEC_Cooperative_2020, Comp_and_Comm_Dingzhu, MEC_Partition_2020} and the optimal placement of deep neural networks (DNNs) \cite{MEC_Placement_2020} have been investigated to achieve high computational efficiencies and accuracies for AI tasks over finite-rate wireless links.
A popular class of techniques for training DNN models, called \emph{federated learning} (FL), have been developed to leverage local data and on-device computation while preserving privacy by avoiding direct data uploading, which shares the principles of DEFT \cite{FL_2016, Reliable_FL_2019, Hierarchical_FL, WC_FL_2020}.
However, in each of the FL iterations, every participating device is required to update the entire model and upload the associated stochastic gradients, which is infeasible for devices given the enormous size of a FoMo that ranges from several to hundreds of billion parameters \cite{LLM_federated_2024, DEFT_2023}.
An alternative approach, called \emph{split learning}, where a DNN model is cut into multiple parts for placement on different devices to sequentially compute corresponding stochastic gradients for model updating based on back-propagation (see, e.g., \cite{Split_learning, HiveMind, split_learning_6g}).
However, the transmission of high-dimensional intermediate features for dependent computation among devices results in excessive overhead, especially given the typically large number of iterations.
For example, the feature dimension of a RoBERTa base model using a batch size of 32 is 12.5 million, which is over $500\times$ of the number of LoRA fine-tuning parameters (i.e., 0.024 million) in one associated block \cite{LoRA}.
Meanwhile, deep fading channels of devices are difficult to cope with by scheduling due to the rigid sequential computation order of devices.
Even though researchers have attempted on parallel computing by creating training pipelines, its practicality for mobile networks is questionable due to the uniform requirements on hardware and communication links for devices \cite{HiveMind, PipeDream}. 

The considered multi-device DEFT approach divides and schedules the fine-tuning parameter blocks within a FoMo for edge devices to simultaneously optimize different parts of tunable parameters using local data. 
Compared with FL, devices' computation-and-memory requirements are dramatically lower as they are required to tune a small fraction of model parameters (e.g., 1\%) instead of full model updating.
Moreover, the communication overhead in the context of DEFT is proportional to the number of tunable parameters and, thus, also much less than that for split learning.
Unlike split learning constrained by dependent computing, device scheduling can be employed for DEFT to cope with fading as well as mobile limitations, which is actually the theme of this work.
However, while DEFT has several advantages, its efficient implementation faces a new challenge arising from the fact that the on-device gradient computation for an individual parameter block of a FoMo incurs memory cost and latency that are dependent on the position (i.e., depth) of the block \cite{Memory-efficient, Sublinear_memory, Pytorch_training}.
Specifically, based on the chain rule, the gradients of a specific block are computed by back-propagation from the last block, resulting in the memory-latency cost being monotone-increasing functions of the block depth \cite{Sublinear_memory}.
In the context of FoMo training, the discrepancy among blocks is amplified as each attention block in a transformer is overparameterized, e.g., e.g., 1.8 billion parameters in one transformer block of the GPT-3 model \cite{GPT-3_2020}.
While this fact is largely overlooked in traditional distributed learning targeting relatively small models, it is important to consider the variations of memory-and-latency cost over blocks when designing multi-device DEFT that allocate blocks to devices for local updating.
To be specific, an optimal allocation strategy should match the memory/latency cost of individual blocks with devices' memory/computation capacities and channel states.
This creates a new problem for DEFT termed \emph{block-device matching problem}.

In this work, we make an attempt to solve this problem by optimizing the joint C$^2$ resource management strategy to minimize the fine-tuning latency, yielding the proposed \emph{low-latency DEFT} (LoLa-DEFT) framework.
Such a problem is a mixed integer problem and NP-complete.
We overcome the challenge by designing a low-complexity depth-aware block allocation algorithm named \emph{Cutting-RecoUNting-CHecking} (CRUNCH) that employs a key property of the problem that local computation latency-and-memory cost is a monotone increasing function of block depth.
This allows the search space to be dramatically reduced by shrinking the set of qualified block-device pairs under a latency constraint. 
The optimal policy derived from solving the block-device matching problem is also further developed to facilitate the optimal bandwidth allocation.
The key contributions and findings of this paper are summarized as follows.

    $\bullet$ \textbf{Block Allocation for LoLa DEFT.}
    Given equal bandwidth allocation, the block-device matching problem is transformed into a sequence of feasibility check sub-problems with reduced complexity. 
    Then each sub-problem under a latency constraint from a discrete set involves finding the optimal device-block pairs to make DEFT feasible.
    The sub-problem can be efficiently solved using the proposed CRUNCH algorithm thanks to the exploitation of the mentioned monotone relationship between latency-and-memory cost and block depth.
    The solution nesting a linear search and CRUNCH algorithm is shown to achieve much lower complexity as opposed to the classic Hungarian methods.
    
    $\bullet$ \textbf{Joint Bandwidth-and-Block Allocation for LoLa DEFT}.
    Next, we consider an advanced version of the block-device matching problem incorporating optimal bandwidth allocation.
    The result is a mixed-integer non-linear programming one that is NP-hard.
    By analyzing the problem structure, we proposed an efficient solution approach based on a variation of the Dual Ascent method to compute the optimal policy for joint bandwidth-and-block allocation (JBBA) iteratively. 
    Within each iteration, the Lagrangian optimization is divided into two steps, namely block allocation and bandwidth allocation.
    The former relies on the Hungarian method facilitated by the proposed CRUNCH algorithm to reduce the search space.
    The latter is solved in close form by connecting the device involvement and associated bandwidth. 
    The policy suggests that devices with unfavorable channel conditions or high computation latency/memory cost should be compensated for by favorable bandwidth allocation, which accelerates the DEFT process.
    
    $\bullet$ \textbf{Performance}
    The performance of the proposed block allocation and JBBA strategies for LoLa-DEFT is evaluated through extensive experiments on fine-tuning a RoBERTa model on the GLUE benchmark \cite{GLUE}. They demonstrate significant improvement over the baseline without considering block depth for model training, e.g., a 40\% latency reduction in fine-tuning a RoBERTa base model.

The rest of this paper is organized as follows. 
In Section~\ref{sec: models and operations}, DEFT models and operations are elaborated. 
Then, the problems of block-device matching and JBBA for LoLa-DEFT are formulated in Section~\ref{sec: problem formulation}. 
The low-complexity solution approaches are investigated in Section~\ref{sec: assigned bandwidth} and Section~\ref{sec: joint bandwidth allocation}, respectively.
Experimental results are provided in Section~\ref{sec: experiments}, followed by concluding remarks in Section~\ref{sec: conclusions}.

\section{Models and Operations}
\label{sec: models and operations}

\begin{figure*}
    \centering
    \includegraphics[width=0.75\textwidth]{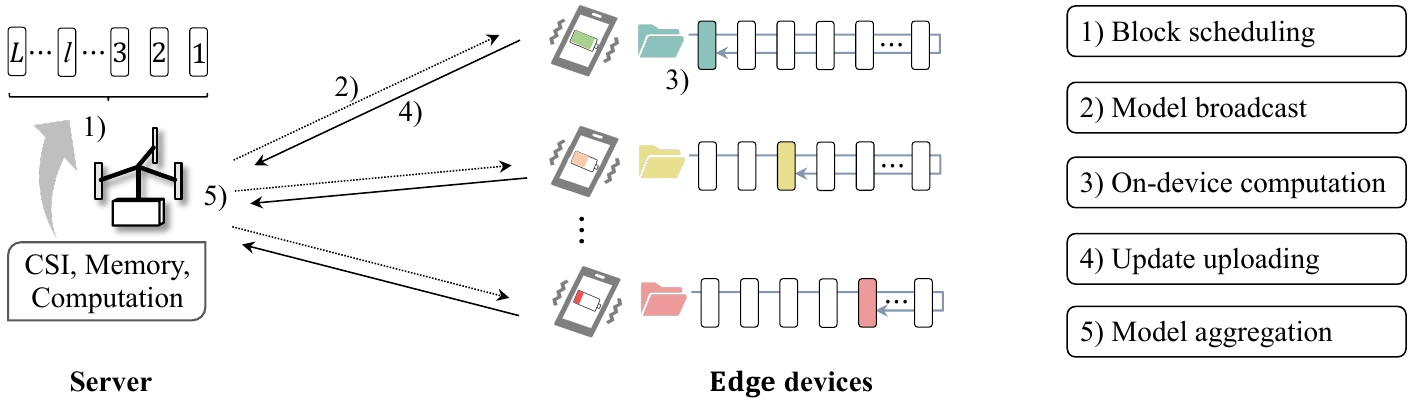}
    \vspace{-4mm}
    \caption{The multi-device cooperative fine-tuning system and operations within one communication round.}
    \label{fig: system model}
    \vspace{-4mm}
\end{figure*}

Consider a single-cell system where $K$ edge devices cooperatively fine-tune a FoMo for a given downstream task, as illustrated in Fig.~\ref{fig: system model}.
The devices utilize on-device computation power and localized data to compute the gradient of tunable parameters in the FoMo using the \emph{stochastic gradient descent} (SGD) method.
The indices of these devices are denoted as $\mathcal{K} = \{1, 2, \cdots, K\}$. 
Fine-tuning parameter division and depth-aware parameter block allocation for multi-device cooperative computation are adopted to accelerate the fine-tuning process.
Since the DEFT requires a repeated aggregation of on-device computed gradients to update the model parameters, the edge server needs to perform depth-aware block allocation by jointly considering local computation and gradient upload at the beginning of each communication round. 
Hence, within each communication round, the activated devices, need to perform the following three steps: 
a) download one initialized/aggregated FoMo;
b) compute the gradients of a scheduled FoMo block in the model iteratively;
c) upload the locally computed FoMo gradients.  
After the server successfully receives all parameter gradients, the latest updated FoMo is offloaded to all scheduled devices to start a new round. 
The workflow of DEFT is summarized in Algorithm~\ref{algorithm: DEFT workflow}, with detailed models and operations described in the following sub-sections.

\begin{algorithm}[t] \caption{Workflow of the Multi-device DEFT.}
\label{algorithm: DEFT workflow}
    \textbf{Initialization:} Initialize fine-tuning parameter blocks $\{\phi_l^{(0)}\}_{l=1, \cdots, L}$. \\
    \textbf{for} communication round $n=1, \cdots, N$ \textbf{do} \\
    \quad \textbf{Server:} \\
    \quad \quad Get devices' memory $\{\hat{b}_k^{(n)}\}$, CSI $\{H_{k}^{(n)}\}$, \\
    \quad \quad ~~~~computation coefficient $\{f_k\}$.\\
    \quad \quad Perform depth-aware block allocation (and \\
    \quad \quad ~~~~bandwidth allocation).\\
    \quad \quad Broadcast latest model $\{\phi_{l}^{(n)}\}$ and scheduling \\
    \quad \quad ~~~~instructions $\{\alpha_{k,l}^{(n)}\}$ (and $\{B_k\}$).\\
    \quad \textbf{Devices:} \\
    \quad \quad Check scheduling instructions \\
    \quad \quad \textbf{for} activated devices \textbf{do}: \\
    \quad \quad \quad Fetch the latest model. \\
    \quad \quad \quad \textbf{for} local iteration $m=1, \cdots, M$ \textbf{do}: \\
    \quad \quad \quad \quad Perform gradient computation to optimize \eqref{eq: local fine-tuning} \\
    \quad \quad \quad \quad ~~~~based on localized dataset.  \\
    \quad \quad \quad \quad Terminate gradient propagation after the \\
    \quad \quad \quad \quad ~~~~scheduled block's gradient is obtained. \\
    \quad \quad \quad \textbf{end for} \\
    \quad \quad \quad Upload the computed gradient $\{g^{(n)}(\phi_l)\}$. \\
    \quad \quad \textbf{end for} \\
    \quad \textbf{Server:} \\
    \quad \quad Aggregate the gradients of parameter blocks and \\
    \quad \quad ~~~~update the FoMo based on \eqref{eq: model aggregation}. \\
    \textbf{end for} \\
    \textbf{Output:} the cooperatively fine-tuned $\{\phi_l^{(N)}\}_{l=1, \cdots, L}$. \\
\end{algorithm}

\vspace{-5mm}
\subsection{Fine-tuning Model}
\label{subsec: fine-tuning model}

In the considered DEFT, the tunable parameters are evenly distributed over model blocks (e.g., Transformer-based FoMos) as commons adopted in the literature and practice \cite{Scale_2023, LoRA, Adapter_2019, bitfit_2022}.  
Suppose the given pre-trained foundation model, denoted by $P_{\Phi} (y|x)$, is parameterized by $\Phi$.
$P_{\Phi} (y|x)$ is a generic multi-task learner such as GPT.
We follow the fine-tuning downstream task setup in \cite{LoRA} to adapt $P_{\Phi} (y|x)$ to conditional text generation task, e.g., natural language understanding. 
In centralized fine-tuning, the desired task is characterized by a training dataset consisting of context-target token sequence pairs, which is denoted by $\mathcal{Z} = \{(x_i, y_i)\}_{i=1, \cdots, |\mathcal{Z}|}$.
For example, $x_i$ is the content of an article and $y_i$ is its summary in the text summarization task.
A set of tunable fine-tuning parameters, denoted by $\phi$, associated with $L$ blocks within one transformer, i.e., $\phi = \{\phi_1, \cdots, \phi_L\}$, is optimized through SGD-based optimization by repeatedly feeding the training data into $P_{\Phi} (y|x)$, to maximize the conditional language modeling objective.
$\phi$ can be exactly the original foundation model parameter $\Phi$ in the full-model tuning case, or $\phi$ is the newly appended fine-tuning parameters in PEFT settings.
Overall, fine-tuning task is hence optimizing $\phi$ through maximizing the conditional language modeling objective, i.e., the probability of generating the desired token given the conditioning context and previously generated words,
\begin{equation}
    \max_{\phi} \sum_{(x,y) \in \mathcal{Z}} \sum_{j=1}^{|y|} \log \big ( P_{\Phi, \phi} (y_j | x,y_{<j}) \big)
    \label{eq: fine-tuning}.
\end{equation}
In the considered multi-device cooperative fine-tuning, an arbitrary device, say device $k$, holds a localized training dataset $\mathcal{Z}^{(k)} = \{(x_i^{(k)}, y_i^{(k)})\}_{i=1, \cdots, |\mathcal{Z}^{(k)}|}$.
The localized data is assumed to be independent and identically distributed (i.i.d.). 
Hence, the asynchronous update of different parameter blocks can be performed using localized data in a distributed manner \cite{ASGD}.  
Thereby, the fine-tuning parameter is obtained through the optimization in \eqref{eq: fine-tuning} by replacing the centralized dataset $\mathcal{Z}$ with $ \mathcal{\hat{Z}} = \cup_{k=1}^{K} \mathcal{Z}^{(k)}$.

\vspace{-5mm}
\subsection{On-device Computation Model}
\label{subsec: on-device computation model}

\begin{figure*}
    \centering
    \includegraphics[width=0.73\textwidth]{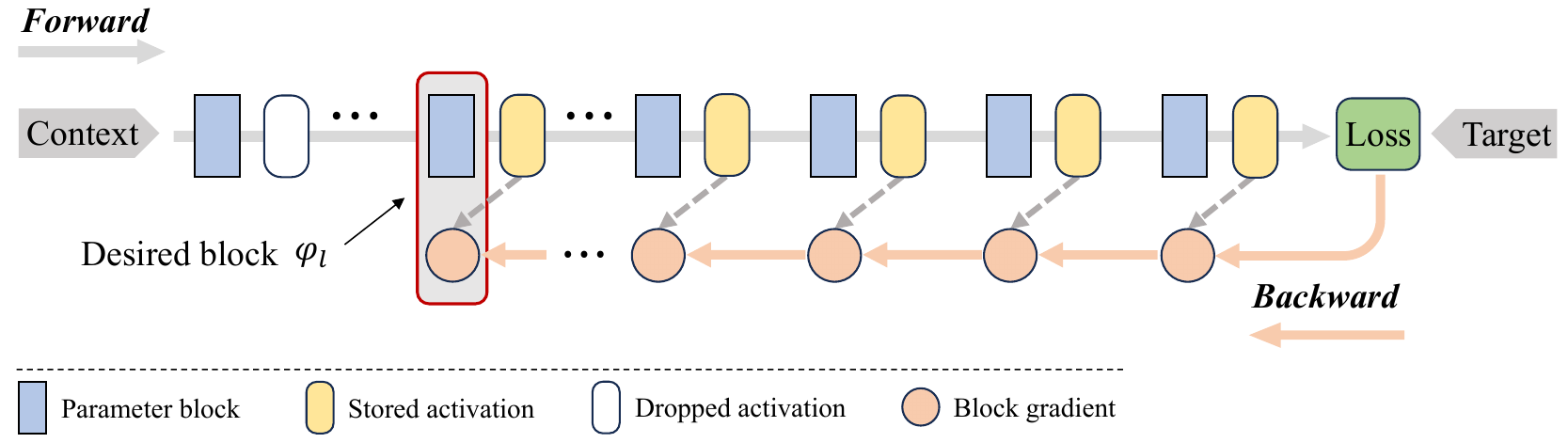}
    \vspace{-3mm}
    \caption{On-device computation of parameter gradient in device $k$ based on the gradient backpropagation approach. 
    All the activations, i.e., forward intermediate results and block gradients, are recorded in memory for chain rule-based gradient calculation. 
    For multi-device DEFT, one activated device only needs to record the intermediate activations associated with the desired block and its gradient for server update. 
    Once the desired parameter block is calculated, the backpropagation is terminated and the devices can fetch a new local iteration.
    }
    \label{fig: on-device computation}
    \vspace{-5mm}
\end{figure*}

\subsubsection{On-device Gradient Backpropagation-based Fine-tuning}
In the considered FoMo division for DEFT, each device is responsible for computing the gradient of one parameter block in the entire $\phi$.
Consider an arbitrary device, say device $k$, is scheduled to compute the gradient of a specific fine-tuning parameter block $\phi_l$, denoted as $g(\phi_l)$, toward the following on-device optimization:
\begin{equation}
    \max_{\phi_l} \sum_{(x,y) \in \mathcal{Z}^{(k)}} \sum_{j=1}^{|y|} \log \big ( P_{\Phi, \phi_{l}} (y_j | x,y_{<j}) \big )
    \label{eq: local fine-tuning}.
\end{equation}

Consider the commonly adopted backpropagation approach for gradient computation as shown in Fig.~\ref{fig: on-device computation}.
The training data is fed and propagated through the entire foundation model to calculate the loss function, i.e., the objectives in \eqref{eq: local fine-tuning}. 
There is a need to store all the weights, intermediate activations and gradients along the propagation of the whole model in SGD-based fine-tuning of foundation models.
Then, the gradient of the loss function with respect to (w.r.t.) the fine-tuning parameter, denoted by $g(\phi_{l})$, is obtained by the repeated employment of chain rule from the gradient of the last layers towards the former layers.
Specifically, the gradient of the former layer's parameters is determined by its intermediate activation saved in the forward path and the later layer's gradient.

\subsubsection{Memory-aware On-device Cooperative Computation}
To transcend the on-device memory limitation, the fine-tuning parameter division allows the heterogeneous devices to cooperatively compute the gradient of fine-tuning parameters residing in different depths according to the corresponding memory budgets. 
% Moreover, devices can adopt gradient checkpointing to record only a few intermediate activations after the scheduled tuning block instead of storing all of them \cite{Sublinear_memory}, making the on-device fine-tuning memory efficient at the sacrifice of the on-the-fly forwarding of intermediate activation for gradient calculation.
Once the device completes the gradient calculation of the scheduled model block, the backward path is terminated.
Given the $L$ block-repetitive transformer architecture, the server needs to schedule $L$ block among $K$ devices ($L \leq K$).
Specifically, the allocation of a parameter block with depth index $l$, i.e., $\phi_l$, to device $k$ in round $n$ is denoted by $\alpha_{k,l}^{(n)} = 1$, otherwise $\alpha_{k,l}^{(n)}=0$.
The on-device memory (in bits) required to compute the gradient of fine-tuning parameters with block index $l$ is denoted by $b(l)$, where $b(\cdot)$ is a monotone-increasing function of $l$.
For example, $b(l)$ is a linear function over $l$ if none of the memory-saving techniques are adopted \cite{Sublinear_memory}.
% , or it could be a linear function over $\sqrt{l}$ if the checkpointing is adopted \cite{Sublinear_memory}.

\subsubsection{On-device Computation Latency}
To achieve a balanced fine-tuning of all parameters, the on-device gradient computation within one round is performed with the same number of iterations, $M$, and the same batch size of training samples is adopted.
The latency bottleneck of on-device computation is the backpropagation process, which triples the latency of a forward path, see \cite{Pytorch_training}. 
The sequential computation of a substantial number of parameters within each model block necessitates the integration of diverse backpropagation lengths, i.e., $l \in \{L, \cdots, 1\}$, for multi-device scheduling.
Therefore, the computation latency of one device is dominated by its computation capability and the depth index of the assigned parameter block. 
Given a device with normalized computation capability, e.g., CPU frequency is 1GHz, the run-time duration (in seconds) of computing the gradient of block $l$ is $d(l) = a + c(l)$, where $a$ represents the latency of data feeding and the entire model forward, $c(l)$ is the latency of backpropagation ended at block $l$.
$c(l)$ is a monotone-increasing function of $l$ and generally is a linear one.
Consider the heterogeneous computation capabilities of edge devices, each with a relative computation factor of the normalized one, denoted by $f_k, \forall k$.
Then, the on-device computation latency of scheduling device $k$ for fine-tuning model block $l$ in round $n$ can be written as
\begin{equation}
    T_{k, l, \sf comp}^{(n)} = \alpha_{k, l}^{(n)} \frac{M d(l)}{f_k} \triangleq \alpha_{k,l}^{(n)} J_{k,l}.
\end{equation}

\subsection{Communication Model}

\subsubsection{Model Downloading}
At the beginning of each communication round, the server first broadcasts the latest foundation model to all involved devices. 
Since the server is associated with the base station, the broadcasting rate could be relatively high. 
Therefore, the downlink latency can be regarded as the same among devices and is considered to be negligible.

\subsubsection{Gradient Uploading}
Once the device finishes computations within one round, the accumulated gradient of $M$ local iterations is uploaded to the server for central aggregation.
The devices are connected to the edge server via wireless channels, where the \emph{orthogonal frequency-division multiple access} (OFDMA) is adopted for allocating a broadband spectrum to involved devices.
The channels are assumed to be frequency non-selective and there is a sufficiently large number of sub-carriers, making the bandwidth allocation can be approximated as being continuous.
The total bandwidth is represented by $B$.
Consider an arbitrary device, say device $k$.
Its transmit power is represented by $p_k$, which is known by the server. 
The channel power in the $n$-th communication round is denoted by $H_{k}^{(n)}$, which remains a constant within one round and can be perfectly estimated by the server. 
Let $0 \leq B_k^{(n)} \leq B$ denote the bandwidth allocated to device $k$. 
The upload communication rate, $R_k^{(n)}$, can be written as
\begin{equation}
    R_k^{(r)} = B_k^{(n)} \log_2(1 + \frac{p_k H_{k}^{(n)}}{N_0}) \triangleq B_k^{(n)} r_k^{(n)} ,
\end{equation}
where $N_0$ is the white Gaussian noise power, $r_k^{(n)}$ is defined to be the uplink spectrum efficiency.
For the blockwise distributed fine-tuning parameters, its size is assumed to be $S$ (in bits).
Then, the upload latency of scheduling device $k$ targeting the fine-tuning of model block $l$ in round $n$ can be written as
\begin{equation}
    T_{k, l, \sf comm}^{(n)} = \frac{\alpha_{k, l}^{n} S}{B_k^{(n)} r_k^{(n)}}
\end{equation}

\subsection{Global FoMo Updating}

After the server successfully receives the gradients of all the model blocks, i.e., the accumulated gradient of all parameter blocks, in round $n$, denoted by $\{g^{(n)}(\phi_l)\}_{l=1, \cdots, L}$, the server performs centralized model aggregation.
Given the gradient descent is applied, the updated model block $\phi_l$ after round $n$, represented by $\phi_l^{(n)}$ can be written as
\begin{equation}
    \phi_l^{(n)} = \phi_l^{(n-1)} - \eta^{(n)} \mathcal{G} \big(g^{(n)}(\phi_l) \big),
    \label{eq: model aggregation}
\end{equation}
where $\eta^{(n)}$ is the predefined learning rate, $\mathcal{G}(\cdot)$ represents the variation of gradients depending on the selection of different optimizers.
For example, the Adam optimizer relies on the first and second-order moments for adaptive element-wise updates of fine-tuning parameters.

\vspace{-2mm}
\section{Problem Formulation}
\label{sec: problem formulation}

As the core component of LoLa DEFT, the objective of parameter block allocation is to minimize the total fine-tuning latency, considering the devices' heterogeneous computation capability, available memory space, and channel status.  
In this section, the associated block-device matching problem is formulated as two optimization problems for the relatively simple case with given bandwidth allocation and the JBBA case.
Given a fixed number of communication rounds $N$, the latency minimization of the entire fine-tuning process is to minimize the latency of every communication round.
Therefore, the problem can be simplified as a one-round latency minimization problem equivalently. 
For ease of notation, the superscripts indicate the communication round number is omitted in the following.

Consider an arbitrary device, say device $k$. 
The associated latency, $T_k$, consists of two components, i.e., the on-device computation part and the uplink communication part, which can be written as
\begin{equation}
    T_k = \sum_{l=1}^{L} (T_{k, l, \sf comp} + T_{k, l, \sf comm}) = \sum_{l=1}^{L} \alpha_{k, l} \big( J_{k,l} + \frac{S}{B_k r_k} \big).
\end{equation}
For the considered multi-device cooperative system, the overall latency is exactly determined by the device involved with the maximum summed latency, i.e., 
\begin{equation}
    T = \max_{k \in \mathcal{K}} \Big \{\sum_{l=1}^{L} \alpha_{k, l} \big( J_{k,l} + \frac{S}{B_k r_k} \big)\Big\}.
\end{equation}
On top of the target of latency minimization, block allocation requires satisfying a few constraints as stated below.
At the beginning of each communication round, the server is required to schedule $L$ distinct devices from $K$ involved devices to compute the fine-tuning parameter block residing in depth from $l=L$ to $l=1$, which introduces the following spreading model block constraints,
\begin{equation} \mathrm{(C1) ~~~~~~}
    \sum_{k=1}^{K} \alpha_{k, l} = 1, \forall l. \notag
\end{equation}
The constraint ensures that one model block must be allocated to exactly one edge device for fine-tuning.
Moreover, a device is assigned to finish the task of computing the gradient of at most one model block if it is involved, i.e., 
\begin{equation} \mathrm{(C2) ~~~~~~}
    \sum_{l=1}^{L} \alpha_{k, l} \leq 1, \forall k. \notag
\end{equation}
Besides, the model block assignment for involved devices should not exceed the device's available memory, i.e., avoiding hitting the memory wall. 
Considering the memory requirements for different on-device computations, the memory budget of device $k$ is denoted as $\hat{b}_k$, the corresponding constraints are given as
\begin{equation} ~~~~~\mathrm{(C3) ~~~~~}
    \sum_{l=1}^{L} \alpha_{k, l} b(l) \leq \hat{b}_k, \forall k. \notag
\end{equation}

Consider the scenario with equal bandwidth allocation ($B_{k} = B/L$). 
Under the above constraints and the target of one round latency minimization, the block allocation problem for multi-device cooperative fine-tuning is formulated as follows:
\begin{equation*}\mathrm{(P1) \quad}
    \begin{aligned}
        \min \limits_{\{\alpha_{k,l}\}} \qquad 
        & \max_{k \in \mathcal{K}} \quad \sum_{l=1}^{L} \alpha_{k,l} T_{k,l} \\
        \mathrm{s.t.~~~~~} 
        & T_{k,l} = J_{k,l} + \frac{S L}{B r_k},  \forall k, l, \\ 
        & \mathrm{(C1)~\&~(C2)~\&~(C3)}, \\
        & \alpha_{k, l} \in \{0, 1\},  \forall k, l.
    \end{aligned}
\end{equation*}
Next, consider the JBBA case where $\{B_k\}$ is joint optimized with block allocation.
The diversified bandwidth assignment can further align the latency of all involved devices to shorten the overall duration. Specifically, the associated scheduling problem is formulated as follows:
\begin{equation*} \quad \mathrm{(P2) \quad}
    \begin{aligned}
        \min \limits_{\{\alpha_{k,l}\}, \{B_k\}} \quad 
        & \max_{k \in \mathcal{K}} \quad \sum_{l=1}^{L} \alpha_{k,l} T_{k,l} \\
        \mathrm{s.t.~~~~~~} 
        & T_{k,l} = J_{k,l} + \frac{S}{B_k r_k},  \forall k, l, \\ 
        & \sum_{k=1}^{K} B_k \leq B, \\
        & \mathrm{(C1)~\&~(C2)~\&~(C3),} \\
        & \alpha_{k, l} \in \{0, 1\}, \forall k, l. \\
    \end{aligned}
\end{equation*}

\section{Depth-Aware Block Allocation for LoLa-DEFT}
\label{sec: assigned bandwidth}

In this section, we design LoLa-DEFT by solving the block-device pairing problem in $\mathrm{(P1)}$. 
The optimal block allocation is obtained by first transforming the original min-max problem into an equivalent latency-threshold minimization problem and then addressing the later problem with low complexity via leveraging a linear search approach and a proposed depth-aware feasibility checking method.

% By exploiting the depth-latency correlation, we have proposed a low-complexity optimal scheduling algorithm.

% The optimal block allocation is obtained by first transforming the original min-max problem into an equivalent latency-threshold minimization problem and then addressing the later problem via leveraging a linear search approach and a proposed depth-aware feasibility checking method.
\vspace{-5mm}
\subsection{Optimal Block Allocation via Feasible Set Minimization}
For tractability, the min-max problem $\mathrm{(P1)}$ is transformed into a minimization problem by introducing a variable of latency threshold, $T_{\sf th}$, as
\begin{equation*}\mathrm{(P3) \qquad}
    \begin{aligned}
        \min \limits_{\{\alpha_{k,l}\},T_{\sf th}} \quad 
        & T_{\sf th} \\
        \mathrm{s.t.~~~~~} 
        & \sum_{l=1}^{L} \alpha_{k,l} T_{k,l}\leq T_{\sf th}, \forall k,\\
        & \mathrm{Constraints\ in\ (P1)}.\\
    \end{aligned}
\end{equation*}
The equivalence between $\mathrm{(P1)}$ and $\mathrm{(P3)}$ is guaranteed by the joint optimization of $\{\alpha_{k,l}\}$ and $T_{\sf th}$. 
It is observed from $\mathrm{(P3)}$ that the feasible set of $T_{\sf th}$ can be restricted into the discrete set $\{T_{k,l}\}$. 
Hence, the optimal latency threshold towards $\mathrm{(P3)}$ can be obtained by searching over the C$^2$ latency $T_{k,l}$ of all device-block pairs. 
For each searched latency threshold $T_{\sf th}$, the feasibility of associated problem $\mathrm{(P3)}$ requires validation through the finding of the maximum matching on the left qualified device-block pairs.
Hence, the workflow of finding the optimal block allocation can be summarized into the following four steps: 
\begin{itemize}
    \item \textbf{Step 1:} Rearrange $\{T_{k,l}\}$ in a descending order to generate a set $\{\tilde{T}_j\}$ with $j \in \{1,\cdots,KL\}$ and $\tilde{T}_j\geq \tilde{T}_{j+1}$, $\forall j$; set a variable $t = KL+1$ and initialize a set recording solution of $\{\alpha_{k,l}\}$ as $\mathcal{P}=\emptyset$.
    \item \textbf{Step 2:} Update $t = t-1$ and set $T_{\sf th} = \tilde{T}_t$.
    \item \textbf{Step 3:} Validate if there is a feasible solution of $\{\alpha_{k,l}\}$ under the current latency constraint:
    \begin{itemize}
        \item If yes, update the current solution into $\mathcal{P}$ and return to Step 2; 
        \item Otherwise, go to Step 4. 
    \end{itemize}
    \item \textbf{Step 4:}  Output $\mathcal{P}$ as the optimal solution.
\end{itemize}
The operation of such linear search, due to the discrete search $\{T_{k,l}\}$, results in polynomial complexity of $\mathcal{O}(KL)$ in the worst case. However, the feasibility validation in each iteration causes the main complexity. 
To be specific, the devices and FoMo blocks are treated as two disjoint sets in a bipartite graph, formulating the feasibility validation as a matching problem. 
The feasibility validation is then to find a pairwise matching between devices and blocks under the constraints of ensuring all blocks are matched with different devices. 
The resulting matching offers a possible block assignment, whose feasibility is determined by checking the constraints of $\sum_{l=1}^{L} \alpha_{k,l} T_{k,l}\leq T_{\sf th}$ under the current requirement $T_{\sf th} = \tilde{T}_t$. 
The conventional Hungarian algorithm can be leveraged to perform the said bipartite matching. 
However, even using its most efficient variant, the Hungarian algorithm requires a computation complexity of $\mathcal{O}(\max\{K,L\}^3)$, together with the operation of linear search, resulting in the overall complexity of $\mathcal{O}(KL\max\{K,L\}^3)$. Such complexity is not acceptable when a large number of devices or blocks are involved.

% \subsubsection{Depth-Aware Feasibility Checking for Low-Complexity Scheduling}  
\vspace{-5mm}
\subsection{Depth-aware Low-complexity Matching Validation}

On top of resorting to solving the bipartite matching, we propose a low-complexity feasibility-validation method that exploits the correlation between block depth and latency. 
To be specific, the memory and computation latency required by a block monotonically increase as its depth increases. 
This endows the matching bipartite graph of the feasibility-validation problem with a special geometric structure that allows for low-complexity matching. 

Before introducing the low-complexity feasibility-validation criterion, we first define the auxiliary notations.
Typically, considering a given latency constraint $T_{\sf th}$, a $K$-by-$L$ matrix, denoted by $\mathbf{Q}_{T_{\sf th}}$, is initialized to characterize both memory constraints and C$^2$ latency of each fine-tuning block at each device. 
The elements of $\mathbf{Q}_{T_{\sf th}}$ are computed as 
\begin{equation}\label{eq:intial}
    q_{k,l} = \begin{cases}
       T_{k,l},  & b(l)\leq \hat{b}_k \mathrm{\ and\ } T_{k,l}\leq T_{\sf th}, \\
       0,  & \mathrm{otherwise},
    \end{cases}
\end{equation}
where $q_{k,l} = 0$ reveals that the $l$-th block can not be allocated to device-$k$ due to its limited memory or unacceptable latency cost. 
It is noticed that for an arbitrary device, both its resulting latency, $T_{k,l}$, and memory costs, $b(l)$, will monotonically increase as the block depth $l$ increases. 
Hence, the monotone increasing property makes elements in a single row of $\mathbf{Q}_{T_{\sf th}}$ follow the property of
\begin{equation}\label{eq:ordering}
    q_{k,l}\geq q_{k,l^{\prime}}>0,\quad \mathrm{if\ } l\geq l^{\prime}.
\end{equation} 
Then, the number of non-zero elements in each row of $\mathbf{Q}_{T_{\sf th}}$ can be counted as
\begin{equation}\label{eq:counting}
    s_k = |\mathcal{S}_k|,\  \mathcal{S}_k \overset{\triangle}{=}\{q_{k,l}\ |\ q_{k,l}\neq 0,\forall l\}.
\end{equation}
Let $\pi(j)$ denote the $j$-th smallest value in $\{s_k\}$ and $\kappa_{\pi(j)}$ indicate the device index corresponding to $\pi(j)$. 
That is, $s_{\kappa_{\pi(j)}}$ is the $j$-th smallest value in $\{s_k\}$. 
Then, the positive correlation between computation costs and block depth is exploited to realize low-complexity feasible validation as follows.

\begin{Proposition}[Feasibility Checking]\label{Prop:condition}
    \emph{The problem $\mathrm{(P3)}$ has at least one feasible solution with the resulting latency being smaller than or equal to $T_{\sf th}$ if and only if 
    \begin{equation}\label{eq:condition}
        \pi(K-L+l)\geq l, 1\leq l\leq L.
    \end{equation}}
\end{Proposition}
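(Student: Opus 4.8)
The plan is to recognize the feasibility check as a bipartite matching question, settle it with Hall's marriage theorem, and then translate Hall's condition into the order-statistic form \eqref{eq:condition}. First I would make the combinatorial content of the monotonicity \eqref{eq:ordering} explicit: since $q_{k,l}>0$ forces $q_{k,l'}>0$ for every $l'\le l$, the admissible set of each device is a \emph{prefix}, i.e. $\mathcal{S}_k=\{1,2,\dots,s_k\}$. Consequently, in the bipartite graph whose two sides are the blocks $\{1,\dots,L\}$ and the devices $\mathcal{K}$, with an edge between block $l$ and device $k$ exactly when $q_{k,l}\neq 0$, block $l$ is adjacent to device $k$ if and only if $s_k\ge l$. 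A solution of $\mathrm{(P3)}$ at threshold $T_{\sf th}$ is precisely an assignment obeying $\mathrm{(C1)}$–$\mathrm{(C3)}$ together with the latency cap, which is nothing but a matching saturating all $L$ blocks in this graph. Thus feasibility is equivalent to the existence of a block-saturating matching.

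Next I would invoke Hall's theorem: such a matching exists if and only if $|N(A)|\ge|A|$ for every set of blocks $A$, where $N(A)$ is the set of devices adjacent to some block in $A$. The key simplification, which I expect to be the heart of the argument, is that the prefix structure collapses these exponentially many conditions to only $L$ of them. Indeed, because device $k$ is adjacent to block $l$ if and only if $s_k\ge l$, a device lies in $N(A)$ if and only if $s_k\ge\min A$; hence $N(A)$ depends on $A$ only through $\min A$, while $|A|$ is largest, for a fixed $\min A=l$, when $A$ is the full suffix $\{l,l+1,\dots,L\}$. Therefore Hall's condition holds for all $A$ if and only if it holds for the $L$ suffixes, namely
\begin{equation*}
    |\{k: s_k\ge l\}|\ \ge\ L-l+1,\qquad 1\le l\le L.
\end{equation*}

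Finally I would rewrite this counting inequality through the ascending order statistics $\pi(\cdot)$. The inequality $|\{k:s_k\ge l\}|\ge L-l+1$ says that at most $K-(L-l+1)=K-L+l-1$ devices satisfy $s_k\le l-1$; since the $s_k$ are integers, this is equivalent to the $(K-L+l)$-th smallest capability being at least $l$, that is $\pi(K-L+l)\ge l$. Letting $l$ range over $\{1,\dots,L\}$ recovers exactly \eqref{eq:condition}, and because every step is an equivalence the ``if and only if'' follows. The main obstacle is the reduction in the second paragraph: one must argue carefully that testing only the nested suffixes is without loss of generality, which is where the prefix/nested geometry of $\{\mathcal{S}_k\}$ induced by \eqref{eq:ordering} is indispensable; the remaining pieces — the matching reformulation and the order-statistic bookkeeping — are routine once that reduction is in place.
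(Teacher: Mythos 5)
Your proof is correct, but it reaches \eqref{eq:condition} by a different route than the paper. The paper argues both directions directly: for sufficiency it exhibits the explicit assignment of block $l$ to device $\kappa_{\pi(K-L+l)}$ (which is exactly the policy \eqref{eq:assignment} used later by the algorithm), and for necessity it counts that the $L-l+1$ distinct devices serving blocks $l,\dots,L$ must each have at least $l$ admissible blocks, forcing $\pi(K-L+l)\geq l$. You instead pass through Hall's marriage theorem and then collapse the exponentially many Hall conditions to the $L$ suffix conditions $|\{k:s_k\geq l\}|\geq L-l+1$ using the prefix structure $\mathcal{S}_k=\{1,\dots,s_k\}$; both arguments ultimately rest on the same monotonicity fact behind \eqref{eq:ordering}, namely that admissibility of block $l$ for device $k$ implies admissibility of every shallower block. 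Your reduction of $N(A)$ to a function of $\min A$ is the right observation and is carried out correctly, as is the order-statistic translation. What your route buys is a cleaner and fully rigorous necessity direction (the paper's necessity argument is terse, and its intermediate claim ``$s_k=l$'' should read $s_k\geq l$); what it costs is constructiveness --- Hall's theorem only asserts existence, whereas the paper's sufficiency proof directly produces the matching that Algorithm~\ref{algorithm: crunch} outputs, so if you adopt your version you should still note that the suffix analysis yields the explicit assignment \eqref{eq:assignment}.
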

\begin{proof}
    We complete the proof by verifying the adequacy and necessity of~\eqref{eq:condition} as follows.
    % \begin{itemize}
    
        $\bullet$ Meeting~\eqref{eq:condition} provides a feasible solution for $\mathrm{(P3)}$ with $T_{\sf th}$: Given the latency constraint $T_{\sf th}$, the possible user-block pairs are identified by the non-elements in $\mathbf{Q}_{T_{\sf th}}$. Based on the property of~\eqref{eq:ordering}, the user $\kappa_{\pi(j)}$ can be associated with the blocks of $1\leq l\leq j$, subject to the latency and memory constraints. With the condition~\eqref{eq:condition} being satisfied, there is $K-L+l\geq l$, $\forall l$ due to $K\geq L$. Therefore, one can allocate the block $l$ to the user $\kappa_{\pi(K-L+l)}$.
        
        $\bullet$ Feasible $T_{\sf th}$ for $\mathrm{(P3)}$ enforces~\eqref{eq:condition} to hold: Consider the block $l$ allocated to an arbitrary user, say $k$, there shall be $\alpha_{k,l} = 1$ and $q_{k,l}\geq q_{k,l-1}\geq\cdots\geq q_{k,1}>0$ according to~\eqref{eq:ordering}. Then, based on the definition of~\eqref{eq:counting}, there is $s_k = l$. Next, $T_{\sf th}$ is feasible for $\mathrm{(P3)}$ means that all blocks will be allocated to users one by one, with the user responsible for block $l$ having at least $l$ non-zero elements in its corresponding row in $\mathbf{Q}_{T_{\sf th}}$. Due to $K\geq L$, this leads to $\pi(K-L+l)\geq l, 1\leq l\leq L$.
    % \end{itemize}
\end{proof}
Consequently, according to Proposition~\ref{Prop:condition}, given the condition~\eqref{eq:condition} hold, a qualified device-block pairing policy is given by: 
\begin{equation}\label{eq:assignment}
    \alpha_{\kappa_{\pi(K-L+l)},l}=l,\forall l.
\end{equation}

\subsection{Cutting-Recounting-Checking (CRUNCH) Algorithm}

Based on the feasible set minimization mechanism and the low-complexity feasibility check described above, the transformed problem $\mathrm{(P3)}$ can be solved efficiently via the algorithm elaborated below.
The algorithm is established on the matrix $\mathbf{Q}_{T_{\sf th}}$ and consists of three iterative operations: \textbf{C}utting, \textbf{R}eco\textbf{UN}ting, and \textbf{CH}ecking (CRUNCH). 
\begin{itemize}
    \item[1)] \textbf{Cutting:} updates the matrix $\mathbf{Q}_{T_{\sf th}}$ by removing its largest element (set as zero), which removes a feasible value of $T_{\sf th}$ from its feasible set consists of $\{T_{k,l}\}$.
    \item[2)] \textbf{Recounting:} updates the number of non-zero elements in each row of $\mathbf{Q}_{T_{\sf th}}$, i.e., $s_k$, according to~\eqref{eq:counting}.
    \item[3)] \textbf{Checking:} validates the feasibility of $T_{\sf th}$ through the obtained $\{s_k\}$ and Proposition~\ref{Prop:condition}. 
\end{itemize}
The above three steps are repeated until the feasibility is violated, and the final $T_{\sf th}$ gives the optimal latency. 
The above Cutting-RecoUNting-CHecking (CRUNCH) operations are concluded into Algorithm~\ref{algorithm: crunch}, which is illustrated through an example below.

\begin{algorithm}[t] \caption{CRUNCH Algorithm for Optimal Block Assignment.}
\label{algorithm: crunch}
    \textbf{Initialization:} $t=1$, $T_{\sf th} = \inf$, construct $\mathbf{Q}_{T_{\sf th}}$ based on~\eqref{eq:intial}, and compute $s_k^{(0)}$ using $\mathbf{Q}_{T_{\sf th}}$. \\
    \textbf{Repeat:} \\
    \quad \textbf{Cutting:} find $(k^{\star},l^{\star}) = \arg\max_{k,l}\  q_{k,l}$, record \\
    \quad ~~~~$T_{\mathsf{temp}} = q_{k^{\star},l^{\star}}$, and update $\mathbf{Q}_{T_{\sf th}}$ with $q_{k^{\star},l^{\star}}=0$.\\
    \quad \textbf{Recounting:} update $s_{k^{\star}}^{(t)}=s_{k^{\star}}^{(t-1)}-1$ \\
    \quad ~~~~and $\{\pi(j)\}_{1\leq j\leq K}$.\\
    \quad \textbf{Checking:} check the feasibility of $T_{\sf th}$ based on \\
    \quad ~~~~Proposition~\ref{Prop:condition} and $\{\pi(j)\}_{1\leq j\leq K}$:\\
    \quad \quad \textbf{Yes:} update $t=t+1$, and return to the Cutting \\
    \quad \quad ~~~~step;\\
    \quad \quad \textbf{No:} break iteration, output $T_{\sf th} = T_{\mathsf{temp}}$ and \\
    \quad \quad ~~~~$\{\pi(j)\}_{1\leq j\leq K}$.\\
    \textbf{Output:} assignment $\alpha_{\kappa_{\pi(K-L+l)},l}=l,\forall l$. \\
\end{algorithm}

\begin{figure}[t]
    \centering
    \includegraphics[width=0.48\textwidth]{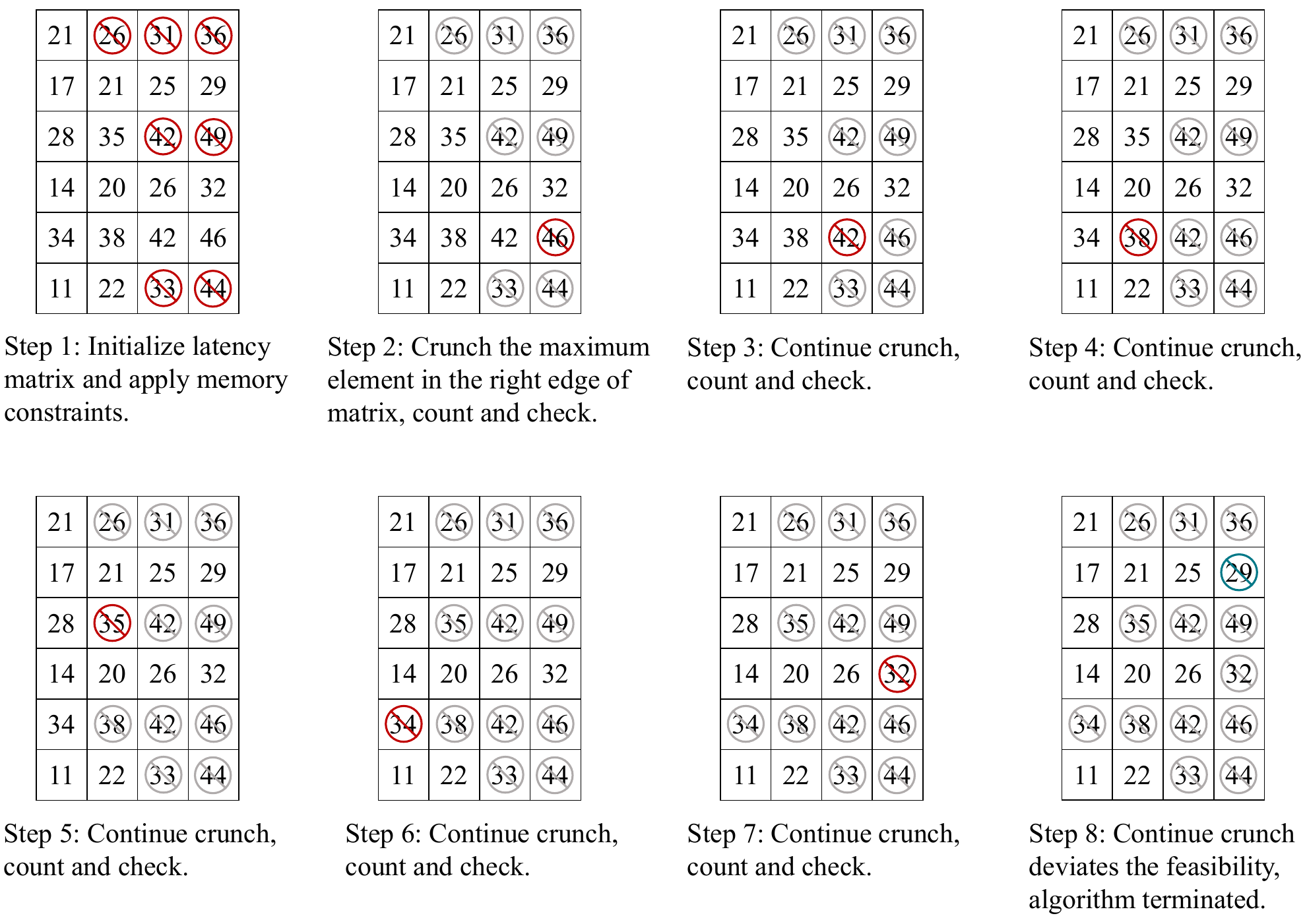}
    \vspace{-2mm}
    \caption{An example of the proposed CRUNCH algorithm.}
    \vspace{-5mm}
    \label{fig: crunch}
\end{figure}

\begin{Example}[Example for Executing CRUNCH]
    \emph{We present an example of block assignment with $K=6$ and $L=4$ in Fig.~\ref{fig: crunch}. During the initialization step, we form the primary matrix $\mathbf{Q}_{T_{\sf th}}$ by removing elements corresponding to user-block pairs that violate memory constraints. Each row of this matrix ensures that the elements are in ascending order from left to right due to the depth property. In the next step, we remove the maximum non-zero element on the right edge of $\mathbf{Q}_{T_{\sf th}}$, which is $46$. As a result, $\{s_k\}$ becomes $\{1, 2, 2, 3, 4, 4\}$, satisfying the feasibility condition stated in Proposition~\ref{Prop:condition}. Consequently, we update $T_{\sf th}$ to $46$ and continue to remove the elements $42$ in Step 3, $38$ in Step 4, $35$ in Step 5, $34$ in Step 6, and $32$ in Step 7. All of these elements pass the feasibility check. Finally, in Step 8, when we remove the $(2,4)$-th element in $\mathbf{Q}_{T_{\sf th}}$, the condition of~\eqref{eq:condition} is violated, leading to the termination of the iteration. Hence, the latency is $29$, and the optimal device-block assignment is $\{(1,1),(6,2),(4,3),(2,4)\}$.}  

\end{Example}

% \subsection{Complexity Analysis}\label{subsec: complexity}
In Algorithm~\ref{algorithm: crunch}, given the initial $\mathbf{Q}_{T_{\sf th}}$, the main complexity in each iteration comes from finding the maximum element in $\{q_{k,l}\}$. Due to the depth-latency correlation, only a single value in each row of $\mathbf{Q}_{T_{\sf th}}$ will be involved in computation, leading to the complexity of feasibility checking being $\mathcal{O}(K)$. 
Then, the overall complexity is $\mathcal{O}(K^2L)$, which is dramatically reduced compared with the conventional method with a complexity of $\mathcal{O}(KL\max\{K,L\}^3)$.

\section{Joint Bandwidth-and-Block Allocation for LoLa-DEFT}
\label{sec: joint bandwidth allocation}

In this section, we further develop LoLa-DEFT to address the practical issue of optimal bandwidth allocation as it helps to accelerate the DEFT process. 
Specifically, we propose a solution approach for the JBBA problem formulated in $\mathrm{(P2)}$. 
It is found that problem $\mathrm{(P2)}$ is a mixed-integer non-linear programming (MINLP), which is an NP-hard problem.
We first introduce the auxiliary variables denoting the device involvement to decouple the block allocation indicators and controllable bandwidth in the original problem.  
By analyzing the Lagrangian of the transformed problem, the \emph{dual ascent method} is designed to obtain the optimal JBBA policy iteratively.
In each iteration, a variation of the designed CRUNCH algorithm is applied to provide sparse qualified device-block pairs, based on which the optimal block allocation is obtained by employing the Hungarian method to find the \emph{minimun-weight perfect matchings}. 
The optimal device involvement and bandwidth allocation are deduced in close-form solutions.

\subsection{Problem Transformation and Analysis}

To facilitate the decomposition of joint bandwidth allocation and block allocation, we first introduce binary decision variables $\beta_k$ to denote the involvement of device $k$ in the considered communication round, where $\beta_k = 1$ if device $k$ is scheduled to compute the gradient of an arbitrary parameter block and $\beta_k = 0$ otherwise.
The constraint $\mathrm{(C2)}$ that one device can be scheduled to compute the gradient of at most one parameter block, which is translated into the constraint of $\beta_k = \sum_{l=1}^{L} \alpha_{k,l}, \forall k $.
Then, based on the device involvement indicator $\beta_k$, the depth-aware block allocation and bandwidth allocation are decoupled in the latency representations.
Hence, the multi-device cooperative fine-tuning latency minimization problem can be rewritten as
\begin{align}
    \min_{\substack{T, \{B_k\}, \{\beta_k\}, \\ \{\alpha_{k,l}\} \in \mathcal{S}_b}} \quad 
    & T \notag \\
    \mathrm{(P4) \quad ~~} \mathrm{s.t.~~~~~~~} 
    & \sum_{l=1}^{L} \alpha_{k,l} J_{k,l} + \beta_k \frac{S}{B_k r_k} \leq T, \forall k, \tag{C4.1} \label{C1}\\
    & \sum_{k=1}^{K} B_k \leq B, \tag{C4.2} \label{C2}\\
    & \beta_{k} = \sum_{l=1}^{L} \alpha_{k,l}, \forall k, \tag{C4.3}  \label{C3}
\end{align}
where the binary depth-aware block allocation indicators $\{\alpha_{k,l}\}$ are constrainted in the feasible set $\mathcal{S}_b$ composed by $\mathrm{(C1)}$, $\mathrm{(C2)} $ and $\mathrm{(C3)}$ listed in problem $\mathrm{(P2)}$.
There are $L$ distinct blocks that should be computed in an arbitrary communication round, which implicitly indicates the constraint on $\{\beta_k\}$, i.e., $\sum_{k=1}^K \beta_k = L$.

Then, the Lagrangian of problem $\mathrm{(P4)}$ is expressed as following
\begin{equation}
    \begin{aligned}
    & \mathcal{L} \big(T, \{B_k\}, \{\beta_k\}, \{\alpha_{k,l}\}, \{\lambda_k\}, \mu, \{\sigma_k\} \big) \\ 
    & \quad = T + \sum_{k=1}^K \lambda_k \big(\sum_{l=1}^L \alpha_{k,l}J_{k,l} + \beta_k \frac{S}{B_k r_k} - T \big) \\
    & \qquad + \mu (\sum_{k=1}^{K} B_k - B) + \sum_{k=1}^{K} \sigma_k (\beta_k - \sum_{l=1}^{L} \alpha_{k,l}),
    \end{aligned}
    \label{eq: Lagrangian}
\end{equation}
where $\{\lambda_k\} \geq 0, \mu \geq 0$ are dual variables associated with the constraints \eqref{C1} and \eqref{C2} respectively, and $\{\sigma_k\}$ are dual variables with constraints \eqref{C3}.
By categorizing the primal values, the Lagrangian \eqref{eq: Lagrangian} can be rewritten as 
\begin{equation}
    \begin{aligned}
    & \mathcal{L} \big(T, \{B_k\}, \{\beta_k\}, \{\alpha_{k,l}\} \big) \\ 
    & \quad = \underbrace{(1 - \sum_{k=1}^K \lambda_k) T}_{\mathcal{L}_T} + \underbrace{\sum_{k=1}^K \sum_{l=1}^L \alpha_{k,l} (\lambda_k J_{k,l} - \sigma_k) }_{\mathcal{L}_{\alpha_{k,l}}} \\
    & ~~~~~~+ \underbrace{\sum_{k=1}^K (\lambda_k \frac{S}{B_k r_k} + \sigma_k) \beta_k + \mu (\sum_{k=1}^K B_k - B)}_{\mathcal{L}_{B_k, \beta_k}}.
    \end{aligned}
    \label{eq: separated Lagrangian}
\end{equation}
It is observed that the block allocation indicators $\{\alpha_{k,l}\}$ are splittable with the device involvement indicators $\{\beta_k\}$ and associated bandwidth $\{B_k\}$, hence can be solved independently, making the \emph{dual decomposition} \cite{ADMM_Boyd} an efficient guideline to tackle problem $(\mathrm{P4})$. 

\begin{algorithm}[t] \caption{Dual Ascent for JBBA}
\label{algorithm: IC2 DEFT}
    \textbf{Initialization:} $\{\lambda_k\} \geq 0, \mu \geq 0, \{\sigma_k\}, T, \{B_k\}, \{\beta_k\}, \{\alpha_{k, l}\} $ \\
    \textbf{Repeat until convergence:} \\
    \quad Update the latency $T$. \\
    \quad Apply CRUNCH algorithm by replacing the element \\
    \quad ~~~~in $\mathbf{Q}$ with $(\lambda_k J_{k,l} - \sigma_k), \forall k, l.$\\
    \quad Apply Hungarian method on the crunched sparse $\mathbf{Q}$ \\
    \quad ~~~~to find optimal assignment $\{\alpha_{k,l}\}$.\\
    \quad Obtain the optimal activated devices and bandwidth \\
    \quad ~~~~allocation using Lemma \ref{Lemma: beta and bandwidth}.\\
    \quad Update the dual variables using \eqref{eq: update dual}.\\
    \textbf{Output:} optimal $T, \{B_k\}, \{\beta_k\}, \{\alpha_{k, l}\} $. \\
\end{algorithm}

\vspace{-5mm}
\subsection{Dual Ascent for JBBA}
\label{subsec: dual ascent}

The strategy of dual ascent is to optimize primal variables and dual variables alternatively.
Specifically, the following two optimization steps are conducted iteratively to obtain the optimal JBBA solutions.

$\bullet$ \textbf{Primal optimization:} Optimizing primal variables $T, \{B_k\}, \{\beta_k\}, \{\alpha_{k,l}\}$ to minimize Lagrangian \eqref{eq: Lagrangian} given fixed dual variables $\{\lambda_k^i\}, \mu^i, \{\beta_k^i\}$ in iteration $i$. The associated primal problem is
    \begin{equation*}\mathrm{(P5)~}
    \begin{aligned}
        \min_{\substack{T, \{B_k\}, \{\beta_k\}, \\ \{\alpha_{k,l}\} \in \mathcal{S}_b}}
        & \mathcal{L}\big(T, \{B_k\}, \{\beta_k\}, \{\alpha_{k,l}\} \big| \{\lambda_k^i\}, \mu^i, \{\sigma_k^i\} \big) \\
        \mathrm{s.t.~~~~} 
        & \beta_{k} = \sum_{l=1}^{L} \alpha_{k,l}, \forall k.
    \end{aligned}
    \end{equation*}

$\bullet$ \textbf{Dual updating:} Apply dual ascent on dual variables to maximize the Lagrangian \eqref{eq: Lagrangian} given the latest obtained primal variables. 
    It is noted that a simple gradient ascend can be applied to dual variables because of their differentiability. 
    Denoting the solution to problem $\mathrm{(P5)}$ in the i-th iteration as $T^i, \{B_k^i\}, \{\beta_k^i\}$ and $\{\alpha_{k,l}^i\}$, the updating of dual variables can be
    \begin{equation}
    \begin{aligned}
        \lambda_k^{i+1} & = \lambda_k^i + \epsilon^i ( \sum_{l=1}^{L}{\alpha_{k,l}^{i}} J_{k,l} + {\beta_k^i} \frac{S}{{B_k^i}} - {T^i}) \\
        \mu^{i+1} & = \mu^{i} + \epsilon^i (\sum_{k=1}^{K} B_k^i - B) \\
        \sigma_k^{i+1} & = \sigma_k^{i} + \epsilon^i (\beta_k^i - \sum_{l=1}^{L} \alpha_{k,l}^i)
    \end{aligned}
    \label{eq: update dual}
    \end{equation}

Hence, the challenge towards the JBBA problem $\mathrm{(P4)}$ lies in tackling problem $\mathrm{(P5)}$ efficiently, where the primal minimization step is decomposed into the minimization of three corresponding separable partial Lagrangians, $\mathcal{L}_T$, $\mathcal{L}_{\alpha_{k,l}}$ and $\mathcal{L}_{\beta_k, B_k}$ concerning the primal variables listed in the subscript respectively.
Obviously, the first term $\mathcal{L}_T$ is a linear function over the latency variable $T$, which induces the achievable minimized round latency $T$, i.e., $\{0, T_{\mathrm{max}}\}$ depend by the sign of the coefficient.
% \begin{equation}
%     T = \begin{cases}
%         0,  & \text{if} \quad (1 - \sum_{k=1}^K \lambda_k) > 0; \\
%         T_{\mathrm{max}}, & \text{if} \quad (1 - \sum_{k=1}^K \lambda_k) \leq 0;
%     \end{cases}
%     \label{eq: L_t}
% \end{equation}
$T_{\mathrm{max}}$ can be an arbitrary prefixed upper bound of the entire latency, e.g., the longest device latency obtained in the last iteration. 
The monotone property of $\mathcal{L}_T$ necessitate the coefficient $(1 - \sum_{k=1}^K \lambda_k)$ tend to 0 over optimization iterations.
The solutions for minimizing the other two partial Lagrangians are detailed in the following subsections.  
The dual ascent-based algorithm solving problem $(\mathrm{P4})$ is summarized in Algorithm~\ref{algorithm: IC2 DEFT}.

\subsection{Block Allocation for Primal Minimization}

The minimization of partial Lagrangian, $\mathcal{L}_{\alpha_{k,l}}$, as introduced in Section~\ref{subsec: dual ascent}, corresponds to the optimal block allocation for JBBA.
It is observed that the problem here is a variant of the previously explored block allocation for a given bandwidth case, i.e., problem $\mathrm{(P1)}$.
The feasibility checking in Proposition~\ref{Prop:condition} is not altered since the co-existence of constraint $\{\alpha_{k,l}\} \in \mathcal{S}_b$. 
The minor difference between these two problems lies in the objective, where problem $\mathrm{(P1)}$ emphasizes the maximum per-device latency while problem $\mathrm{(P5)}$ targets the minimization over summations.
Both aim to find a perfect pairing of parameter blocks within all devices. 

Thereby, the developed CRUNCH Algorithm~\ref{algorithm: crunch} can be employed accordingly.
Specifically, the elements in the constructed matrix $\mathbf{Q}$ are no longer every device's latency but replaced by $(\lambda_k J_{k,l} - \sigma_k), \forall k, l,$ correspondingly, 
After the continuous crunching steps, the remaining elements resided in the crunched matrix $\mathbf{Q}$ before violating the feasibility checking, specify the device candidates for every parameter block. 
Given the crunched matrix $\mathbf{Q}$, the problem $\mathrm{(P1)}$ is solved optimally.
An arbitrary selection of a specific device candidate for every block would not inflect the most stringent latency requirement.
However, the minimization of partial Lagrangian $\mathcal{L}_{\alpha_{k,l}}$ necessitates considering different device selection strategies after obtaining the crunched $\mathbf{Q}$, which can affect the final sum-up result since the concurrence of devices qualified for one block.

The problem then becomes a classical finding of \emph{Minimum Weight Perfect Matching in Bipartite Graphs} \cite{Perfect_matching} within the crunched matrix $\mathbf{Q}$.
The bipartite is formed by the \emph{qualified device set} and the block set, where each connecting edge weighs the value associated in $\mathbf{Q}$ (edges associated with non-existing elements, i.e., unqualified pairs denoted by 0 in $\mathbf{Q}$, are weighted by $+\infty$). 
Then, it can be solved through the existing algorithms, e.g., the Hungarian method, solving the minimum-cost assignment problem \cite{Hungarian_1955}.
It is noted that the Hungarian method can also be directly applied to solve the original problem. 
However, the CRUNCH operations provide a fairly sparse cost matrix for optimal assignment finding.
Up to half of the elements in $\mathbf{Q}$ are exempted from computation in the conduction of the Hungarian method, which accelerates the direct assignment over the intact cost matrix $\mathbf{Q}$.

\subsection{Bandwidth Allocation for Primal Minimization}

We aim to minimize the last partial Lagrangian, $\mathcal{L}_{B_k, \beta_k}$, as introduced in Section~\ref{subsec: dual ascent}, by considering device involvement and bandwidth allocation jointly. The optimal solution is derived below.

\begin{Lemma}\label{Lemma: beta and bandwidth}
    \emph{The $L$ selected device are the ones with $L$ least values in the set of $ \Big \{2 \sqrt{\frac{\lambda_k S \mu}{r_k}} + \sigma_k \Big \}$, the corresponding optimal bandwidth allocation is given by
    \begin{equation}
        {B_k}^{\star} = \sqrt{\frac{{\beta_k}^{\star} \lambda_k S}{r_k \mu}}, \quad \forall k \in \mathcal{K}.
    \end{equation}
    }
\end{Lemma}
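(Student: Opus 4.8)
The plan is to exploit the separability of $\mathcal{L}_{B_k, \beta_k}$ across devices and solve it as a two-level optimization: an inner continuous minimization over the bandwidth $\{B_k\}$ for a fixed involvement pattern, followed by an outer combinatorial selection of the $L$ active devices subject to $\sum_{k} \beta_k = L$. First I would note that, for a fixed $\{\beta_k\}$, the constant $-\mu B$ and the terms $\sigma_k \beta_k$ do not depend on $\{B_k\}$, so minimizing over the bandwidth reduces to minimizing $\sum_{k:\beta_k = 1}\big(\frac{\lambda_k S}{B_k r_k} + \mu B_k\big)$, which decouples into one univariate problem per active device.

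For each active device I would minimize the strictly convex function $f_k(B_k) = \frac{\lambda_k S}{B_k r_k} + \mu B_k$ on $B_k > 0$; convexity follows from $\lambda_k \geq 0$ and $\mu > 0$, the latter holding at optimality since the bandwidth budget is binding. The first-order condition $-\frac{\lambda_k S}{B_k^2 r_k} + \mu = 0$ yields the stationary point $B_k^{\star} = \sqrt{\frac{\lambda_k S}{\mu r_k}}$, which is the global minimizer by convexity; for an inactive device the bandwidth-dependent term vanishes and the minimizer collapses to $B_k^{\star} = 0$. Both cases are captured by the single expression $B_k^{\star} = \sqrt{\frac{\beta_k^{\star} \lambda_k S}{r_k \mu}}$ claimed in the statement. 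Substituting $B_k^{\star}$ back, the minimized per-device contribution evaluates to $f_k(B_k^{\star}) = 2\sqrt{\frac{\lambda_k S \mu}{r_k}}$, so an active device contributes exactly $2\sqrt{\frac{\lambda_k S \mu}{r_k}} + \sigma_k$ to the objective.

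With the bandwidth eliminated, the remaining task is the outer selection. The Lagrangian now reads $\sum_{k:\beta_k=1}\big(2\sqrt{\frac{\lambda_k S \mu}{r_k}} + \sigma_k\big) - \mu B$, a sum of per-device costs over the active set under the cardinality constraint $\sum_{k} \beta_k = L$. Since this cost is separable in $\beta_k$ and the constant $-\mu B$ is irrelevant to the minimizer, the optimum is attained by activating the $L$ devices with the $L$ smallest values of $2\sqrt{\frac{\lambda_k S \mu}{r_k}} + \sigma_k$, which is precisely the claimed selection rule. A short exchange argument (swapping any selected device for an unselected one carrying a smaller cost cannot increase the objective) formally closes this step.

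I expect no serious obstacle: the argument is a clean nesting of a convex inner problem inside a combinatorial selection. The only points requiring care are justifying $\mu > 0$ so that $f_k$ is coercive and the square-root expressions are well defined, and confirming that driving inactive devices to zero bandwidth is simultaneously consistent with the stated closed form and with the budget constraint $\sum_{k} B_k \leq B$.
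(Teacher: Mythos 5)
Your proposal is correct and follows essentially the same route as the paper's proof: stationarity of $\mathcal{L}_{B_k,\beta_k}$ in $B_k$ gives the closed-form bandwidth, back-substitution (with $\sqrt{\beta_k}=\beta_k$ for binary $\beta_k$) reduces the Lagrangian to a separable per-device cost $2\sqrt{\lambda_k S\mu/r_k}+\sigma_k$, and the cardinality constraint $\sum_k\beta_k=L$ forces selection of the $L$ smallest costs. Your added care about convexity, coercivity via $\mu>0$, and the explicit treatment of inactive devices tightens details the paper leaves implicit, but does not change the argument.
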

\begin{proof}
    The partial derivative of $\mathcal{L}_{B_k, \beta_k}$ with respect to $B_k$ is given by 
    \begin{equation*}
        - \frac{\beta_k \lambda_k S}{{B_k}^2 r_k} + \mu = 0.
    \end{equation*}
    The optimal bandwidth allocation is given by setting the derivative as zero. Then, $\{B_k\}$ is substituted back to $\mathcal{L}_{B_k, \beta_k}$. By utilizing the property of binary variables that $\sqrt{\beta_k} = \beta_k$, the resultant expression of $\mathcal{L}_{B_k, \beta_k}$ regarding $\{\beta_k\}$ is
    \begin{equation}
        \mathcal{L}_{B_k, \beta_k} = \Big (\sum_{k=1}^K \big(2 \sqrt{\frac{\lambda_k S \mu}{r_k}} + \sigma_k \big ) \beta_k  \Big) - \mu B.
        \label{eq: optimal beta}
    \end{equation}
    Recall the constraint on $\{\beta_k\}$ that there should be exactly $L$ devices involved, i.e., $\sum_{k=1}^K \beta_k = L$. Hence, the minimization of \eqref{eq: optimal beta} requires setting $\beta_k = 1$ for the devices with the $L$ least coefficients. 
    This completes the proof.
\end{proof}

\begin{Remark}[Bandwidth Compensation for LoLa-DEFT]
\label{remark: bandwidth allocation} 
\emph{For depth-aware block allocation, the round latency is determined by the worst scheduled device-block pair, which is generally less computationally powerful. 
In JBBA, the involved device would hold a greater C$^{2}$ latency-associated dual variable, i.e., $\lambda_k$, if there is a larger latency gap between it and the high-end one, as depicted in \eqref{eq: update dual}.
The optimal bandwidth allocation for JBBA, as elaborated in the previous Lemma, would compensate the involved and less C$^{2}$-conditioned devices with higher bandwidth consumption to align all involved devices' latency, hence further minimizing the overall latency of multi-device DEFT. }
\end{Remark}

\begin{figure*}[t]
    \centering
    \subfigure[]{\includegraphics[width=0.35\textwidth]{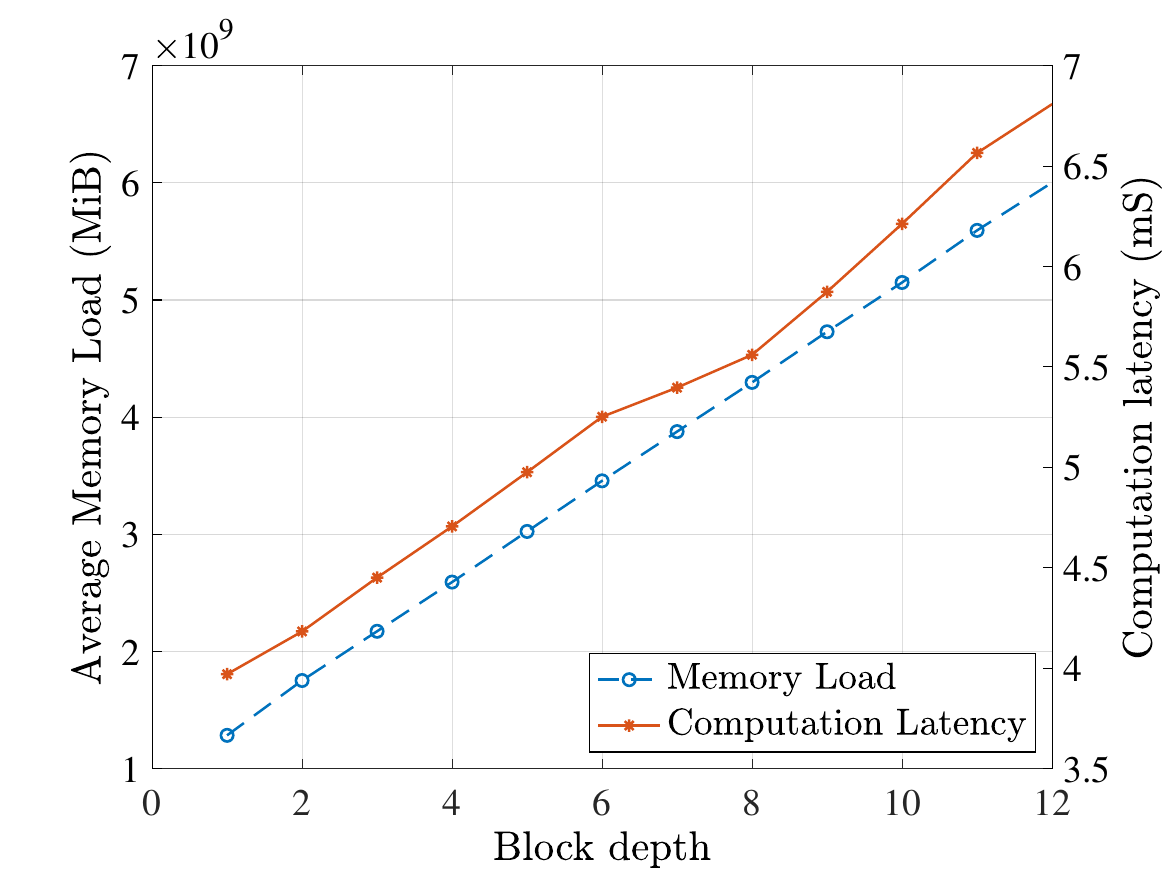}\label{fig: memory latency depth}}
    \hspace{2.5cm}
    \subfigure[]{\includegraphics[width=0.35\textwidth]{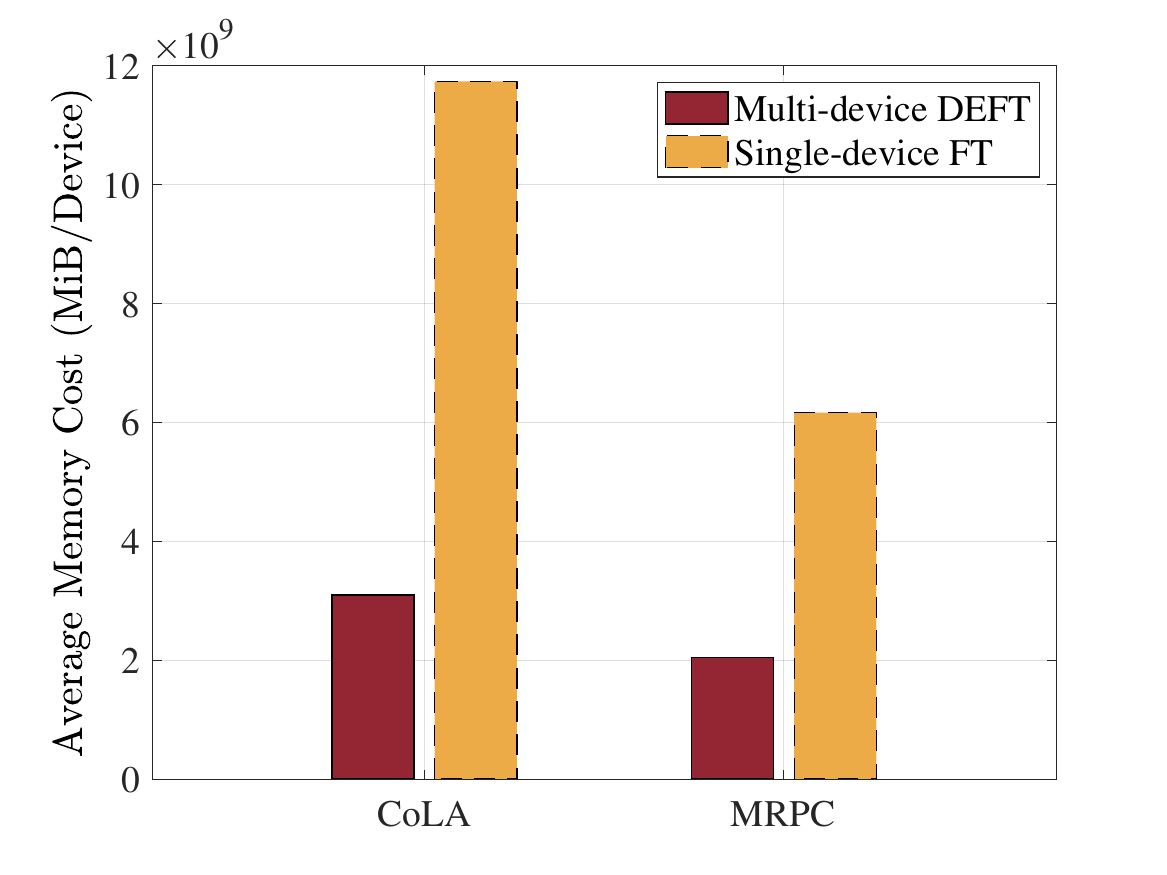}\label{fig: deft memory}}
    \vspace{-2mm}
    \caption{(a) The average on-device memory load and computation latency of fine-tuning the parameter block in different depths of a RoBERTa base model toward the CoLA task. (b) The average memory cost of a device fine-tuning a RoBERTa base model considering single-device computing and multi-device cooperative computing scenarios.
    }
    \vspace{-2mm}
\label{fig: why deft}
\end{figure*}

\vspace{-5mm}
\section{Experimental Results}
\label{sec: experiments}

\subsection{Experiment Setup}

    \subsubsection{Fine-tuning of FoMo's} 
    To verify the effectiveness of the proposed multi-device cooperative fine-tuning paradigm, we conduct experiments based on the low-rank adaptation (LoRA) of the RoBERTa base model.
    The model consists of 12 transformer encoder layers with a hidden size of 768, which are partitioned into 12 blocks accordingly.
    The fine-tuning parameters are two low-dimensional matrices with a rank of 8 for the query and value projection matrix within every block.
    Hence, there are 24576 tunable parameters within one block, each characterized by 32 bits. 
    The GLUE benchmark is the fine-tuning task to evaluate the natural language understanding capabilities over subtasks listed and detailed in \cite{GLUE}.
    We follow the training scripts provided by \cite{LoRA} and accommodate them to our proposed multi-device cooperative fine-tuning paradigm.
    Half-precision training is adopted to save memory costs. 
    The Adam optimizer is employed, and all hyper-parameters are kept the same with centralized tuning for fair comparison.
    All fine-tuning tasks are performed on an A100 80GB server.
    \subsubsection{On-device computation settings} 
    The Snapdragon 8 Gen 3 mobile chip anchors the standard local computation capability.
    The number of floating point operations calculates the latency of local computations over different gradient backpropagation lengths.
    All devices' relative computation capability coefficients are uniformly distributed, i.e., $U(0.5, 1)$.
    All devices' available memory in each round is uniformly distributed in the $[1, 6]$ GB range.
    For depth-aware scheduling, all required memory for fine-tuning every parameter block, ${b}(l), \forall l$, is measured in advance based on the memory consumption of the training server. 
    \subsubsection{Communication settings}
    There are assumed $20$ devices engaged in the entire multi-device DEFT process.
    The channel gains of the devices are assumed to follow i.i.d. Rayleigh fading with a path loss of $10^{-3}$ overall communication rounds.
    The involved device adopts a constant transmit signal-to-noise ratio (SNR) of 10 dB at each round. 
    The bandwidth used for uplink transmission is 100 MHz. 
    For equal bandwidth assignment cases (BA for DEFT), the 12 involved devices equally split the bandwidth budget. 
    For the bandwidth allocation cases (JBBA for DEFT), the upload bandwidth of every device is obtained through the solution of the JBBA problem.
    \subsubsection{Benchmark schemes}
    We present two benchmarking schemes. The computation-ware DEFT activates the devices with the best computation capabilities in each DEFT round and allocates the demanding FoMo blocks according to their computation power.
    The communication-aware DEFT activates the devices with the best communication channel conditions in each round and assigns the blocks accordingly.
    It is noted that the solely computation-aware or communication-aware scheduling of devices for multi-device DEFT could result in hitting the memory wall, making the multi-device cooperative fine-tuning infeasible. 

\vspace{-3mm}
\subsection{Memory-and-computation Efficiency of Multi-device DEFT}

\begin{figure*}[t]
    \centering
    \subfigure[CoLA]{\includegraphics[width=0.35\textwidth]{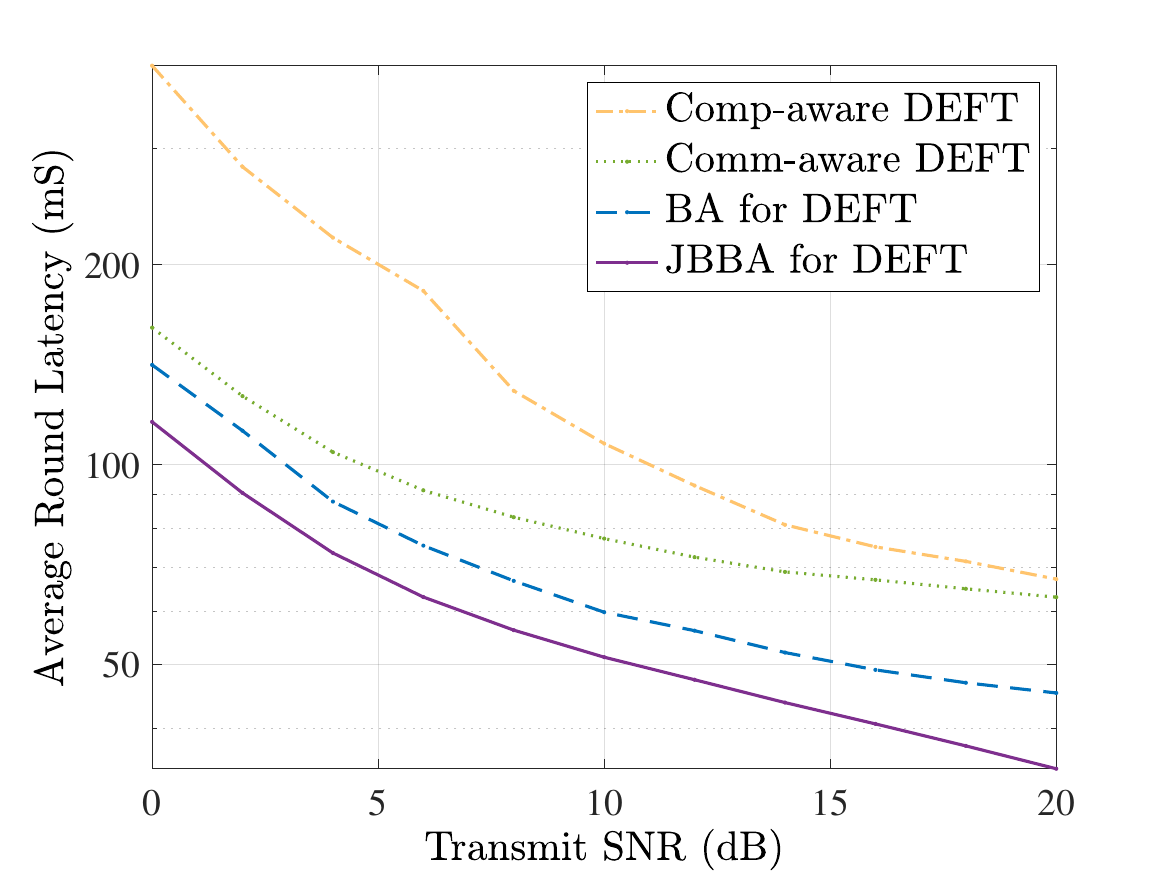}}\label{fig: round_t_snr_cola}
    \hspace{2.5cm}
    \subfigure[MRPC]{\includegraphics[width=0.35\textwidth]{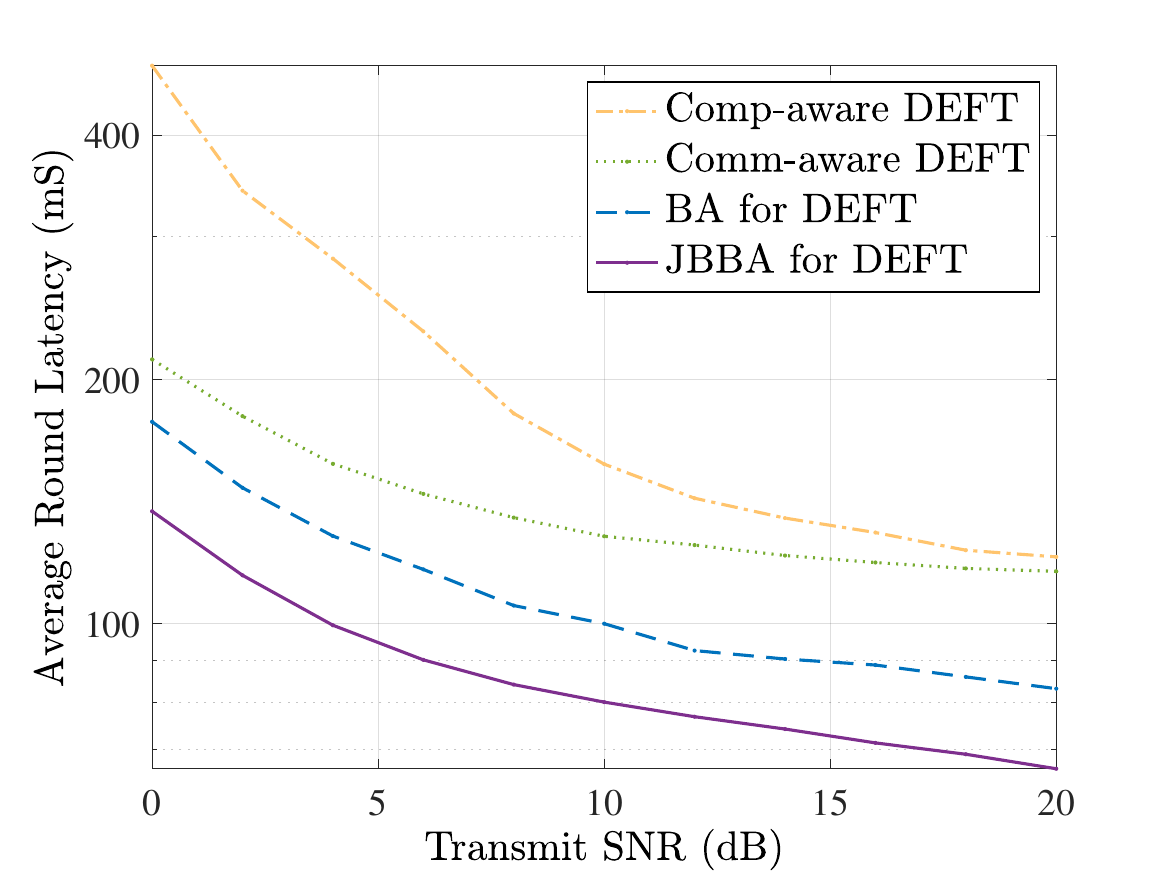}}\label{fig: round_t_snr_mrpc}
    \vspace{-2mm}
    \caption{The averaged round latency of multi-device fine-tuning of RoBERTa w.r.t. the different transmit SNR on the task of (a) CoLA and (b) MRPC, respectively.}
    \label{fig: round_t_snr}
    \vspace{-2mm}
\end{figure*}

\begin{figure*}[t]
    \centering
    \vspace{-5mm}
    \subfigure[CoLA]{\includegraphics[width=0.35\textwidth]{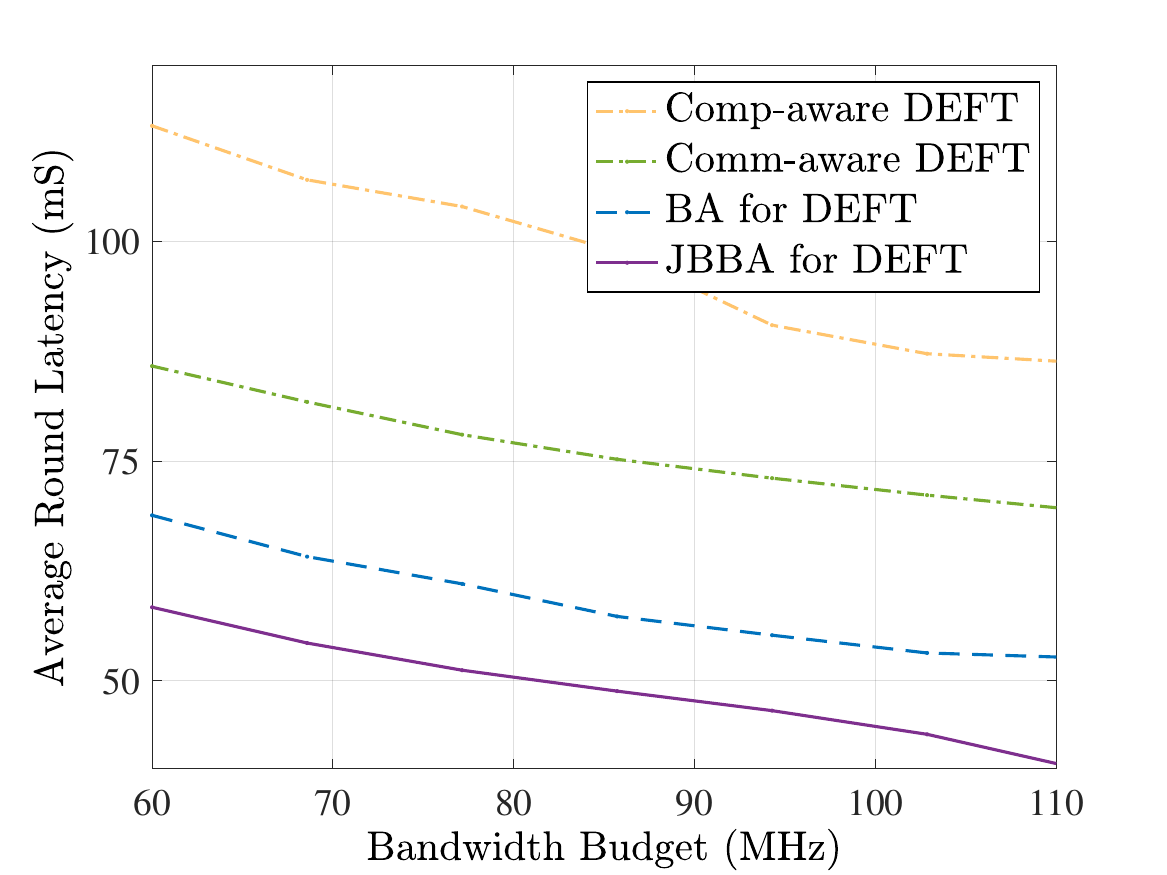}}\label{fig: round_t_B_cola}
    \hspace{2.5cm}
    \subfigure[MRPC]{\includegraphics[width=0.35\textwidth]{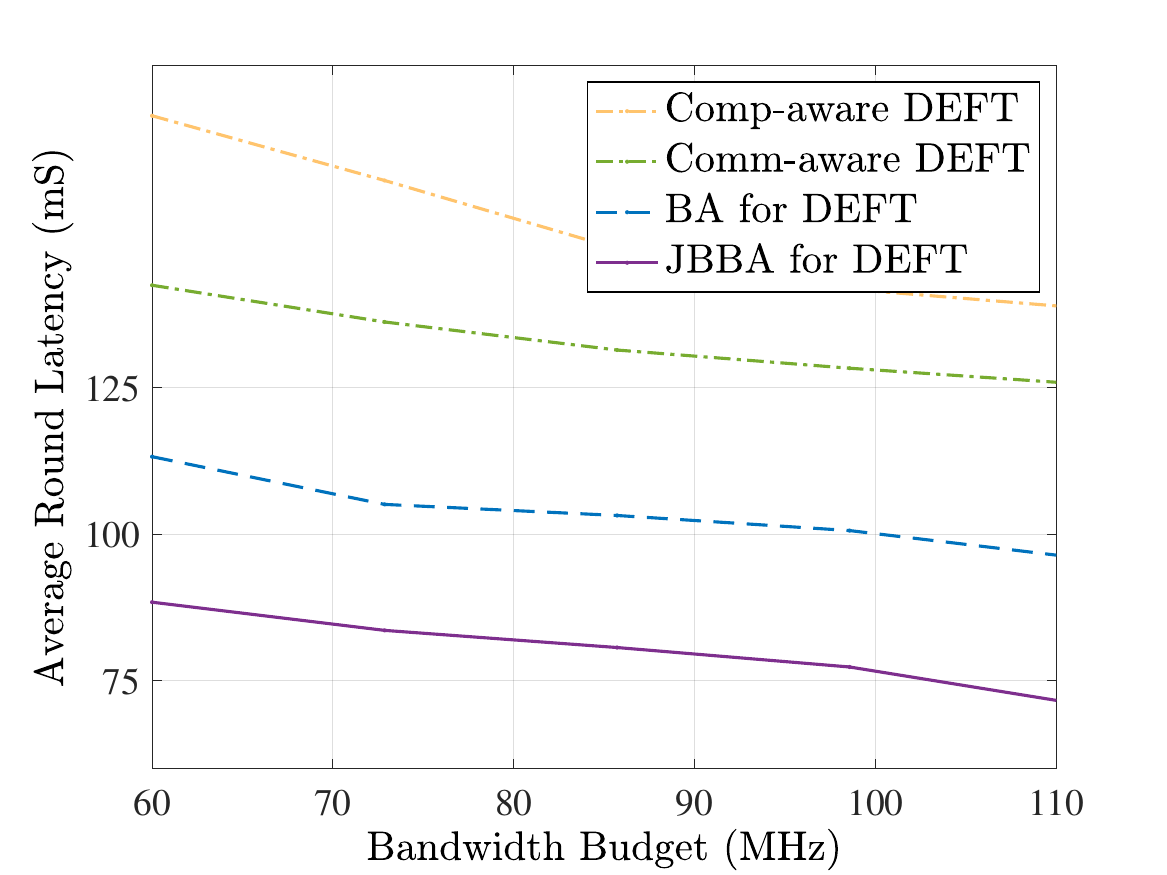}\label{fig: round_t_B_mrpc}}
    \vspace{-2mm}
    \caption{The averaged round latency of multi-device fine-tuning RoBERTa with the different bandwidth budget on the task of (a) CoLA and (b) MRPC, respectively.}
    \vspace{-5mm}
\label{fig: round_t_B}
\end{figure*}

Firstly, we evaluate the significance of backward propagation lengths on memory cost and computation latency.
The related costs and latency of fine-tuning a RoBERTa model toward the CoLA task with a batch size of 32 in a server are depicted in Fig.~\ref{fig: memory latency depth}.
Obviously, the memory load of optimizing the tunable parameters resided in different blocks is linearly increased with the block depth.
The increased memory consumption results from the fact that more intermediate results during the forward path are recorded to compute the gradient of parameters in deeper blocks.
Meanwhile, the average computation latency of one-step optimization is also semi-linearly increased with the block depth, which echoes the longer backpropagation time needed for computing more blocks sequentially.
It is also noted that the computation time for optimizing the deepest block is nearly twice that of optimizing the shallowest one, which causes significant computation diversity and urges the integrated C$^2$ consideration for multi-device cooperative fine-tuning. 

Next, we calculate the memory needed for fine-tuning a RoBERTa in a single device and in a multi-device-server cooperative manner.
For single-device fine-tuning, it not only fetches the model parameters but also records 1). all intermediate results,  2). all tunable parameters' gradients, and 3). statistics of gradients (mean and variance).
In multi-device DEFT, one device only needs to record the gradient of the assigned parameters and the associated intermediate results. The server is responsible for calculating the gradient statistics.
The memory cost per device is significantly reduced, as shown in Fig.~\ref{fig: deft memory}.
A single-device fine-tuning can consume nearly 12GB of memory, which is unaffordable for most current mobile devices.
Multi-device and server cooperation can substantially reduce on-device memory consumption, e.g., only $1/3$ of the single-device case, making the on-device running of fine-tuning algorithms more feasible.

\subsection{Performance of Block Allocation for LoLa DEFT}

The average round latency for multi-device DEFT considering varied transmit SNR and uplink bandwidth budget are depicted in Fig.~\ref{fig: round_t_snr} and Fig~\ref{fig: round_t_B} respectively.
The round latency is generally decreased with the increased transmit SNR or the enlarged bandwidth budget. 
It is observed that the depth-aware block allocation with averaged bandwidth assignment can dramatically reduce the one round C$^2$ latency compared with the baseline scheme, e.g., nearly 40\% reduction in the case of transmit SNR 10 dB. 
The joint bandwidth allocation scheme, i.e., JBBA, can further enhance latency reduction, which tends to align the C$^2$ latency of all involved devices by controlling everyone's communication latency.

\begin{figure*}[t]
    \centering
    \subfigure[CoLA]{\includegraphics[width=0.35\textwidth]{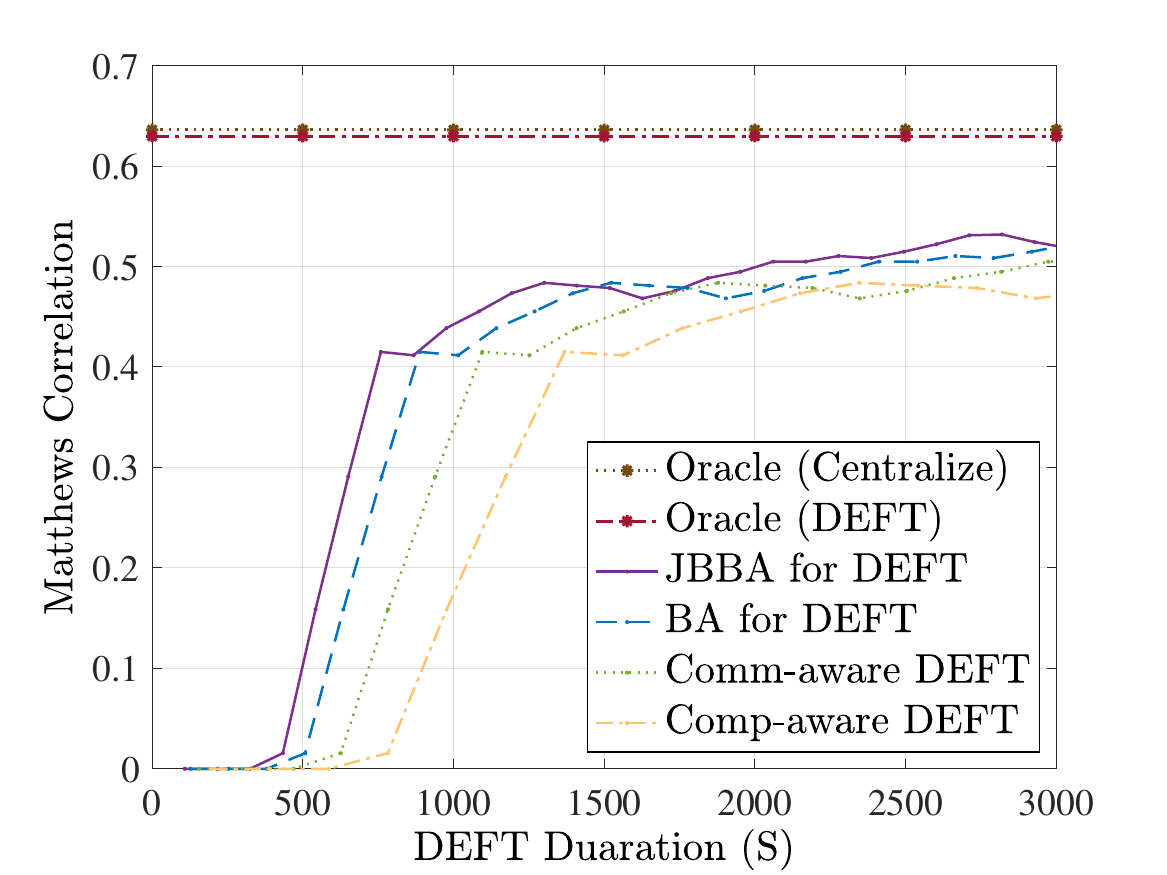}\label{fig: acc_t_cola}}
    \hspace{2.5cm}
    \subfigure[MRPC]{\includegraphics[width=0.35\textwidth]{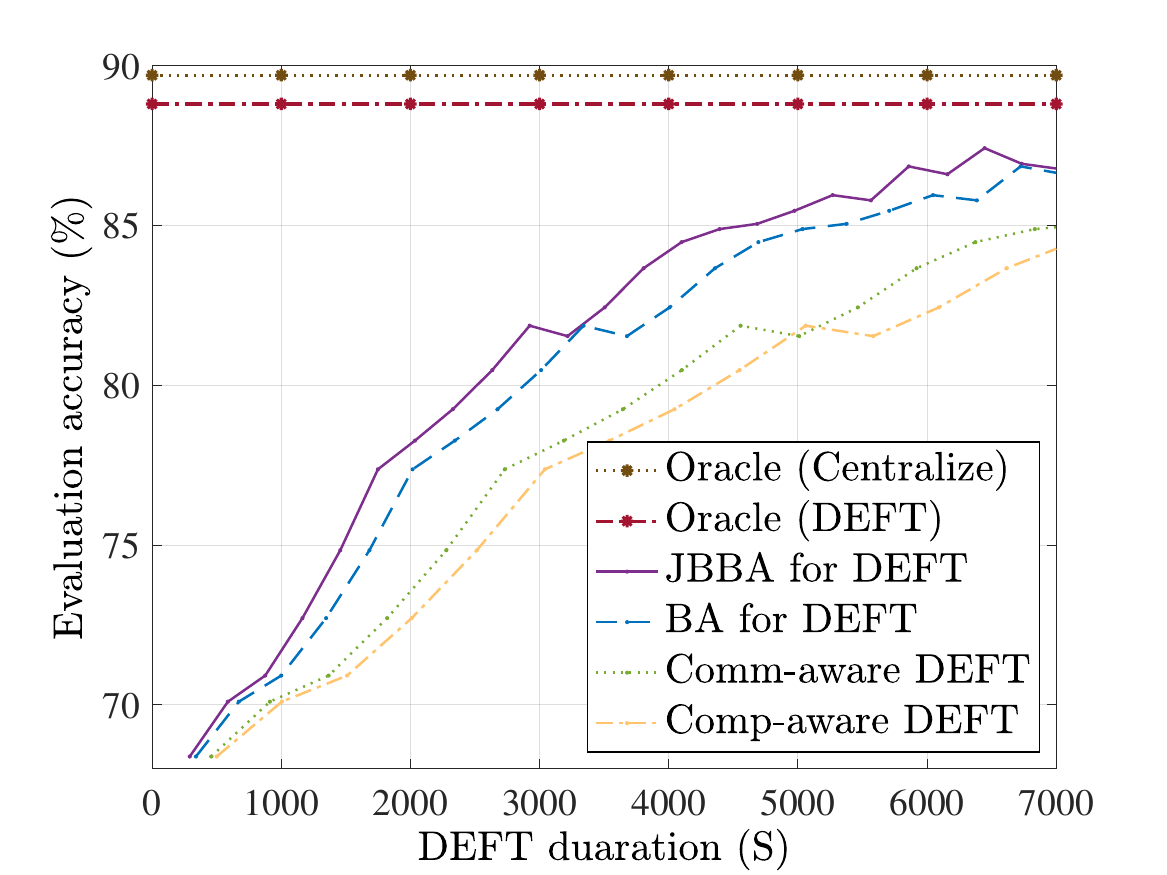}\label{fig: acc_t_mrpc}}
    \vspace{-2mm}
    \caption{The performance of fine-tuned RoBERTa on the task of (a) CoLA and (b) MRPC, respectively.}
\label{fig: acc_t}
\vspace{-2mm}
\end{figure*}

\begin{table*}[t]
\renewcommand\arraystretch{1.4}
    \caption{Duration of multi-device cooperative fine-tuning a RoBERTa base model toward tasks in GLUE adopting different block allocation schemes (in minutes).}
    \label{tab: DEFT durations}
    \centering
    \begin{small}
    \setlength{\tabcolsep}{2.8mm}{
    \begin{tabular}{|c|c|c|c|c|c|c|c|c|}
    \hline
    Schemes           & MNLI  & SST2  & MRPC & CoLA   & QNLI   & QQP   & RTE    & STSB  \\ \hline
    Comm-aware DEFT & 97.47 & 15.74 & 9.51 & 119.18 & 32.67  & 90.46 & 32.32  & 54.30 \\
    BA for DEFT    & 69.86 & 11.44 & 5.81 & 87.42  & 24.15  & 58.57 & 24.05  & 38.69 \\
    JBBA for DEFT   & 58.45 & 9.74  & 5.12 & 71.92  & 20.11  & 50.66 & 19.91  & 34.02 \\
    \hline
    \end{tabular}
    }
    \end{small}
\end{table*}

To further validate the effectiveness of depth-aware block allocation, we plot the performance curves along with the DEFT duration in Fig.~\ref{fig: acc_t}.
The oracle results for centralized fine-tuning are listed in \cite{LoRA}.
For DEFT, the oracle results are the highest correlation/accuracy over running the same number of optimization steps as the centralized one.
It is observed that the multi-device DEFT paradigm can achieve comparable performance with centralized tuning on both of the tasks considered.
Meanwhile, depth-aware block allocation can accelerate fine-tuning by minimizing the C$^2$ latency in every communication round.
Hence, the resultant depth-aware DEFT curve shifts the one-shot schedule curve toward the left in Fig~\ref{fig: acc_t}. 
Since the FoMo fine-tuning tends to saturate in the late training stage, the performance curve would vibrate, making it possible that the performance of BA for DEFT can surpass JBBA, at certain time points, e.g., 1200S and 2200S in the CoLA training curve.
In general, the performance curve of JBBA for DEFT encloses the others. 
We also list the time cost for multi-device DEFT to achieve a satisfactory performance using three schedule schemes under 8 GLUE tasks in Table~\ref{tab: DEFT durations}.  
The desired evaluation performance is set as $90\%$ of the centralized tuning case.
It is observed that BA and JBBA for DEFT schedules can reduce the time cost of the benchmark scheme by nearly $40\%$ and $30\%$ in all tasks, hence significantly accelerating the DEFT process. 

Finally, we investigate the effect of different numbers of devices for multi-device DEFT. The curves are plotted in Fig.~\ref{fig: round_t_K}. 
One can observe that the latency gaps are widened as the number of devices increases. 
In particular, the latency of JBBA is 40\% of the benchmark scheme and 60\% of the BA when 50 devices are participating in DEFT.
It is worth emphasizing that the advantage of JBBA, is more pronounced when the number of devices is large, which provides higher C$^2$ diversity for round-wise device scheduling.

\begin{figure*}[t]
    \centering
    \vspace{-2mm}
    \subfigure[CoLA]{\includegraphics[width=0.35\textwidth]{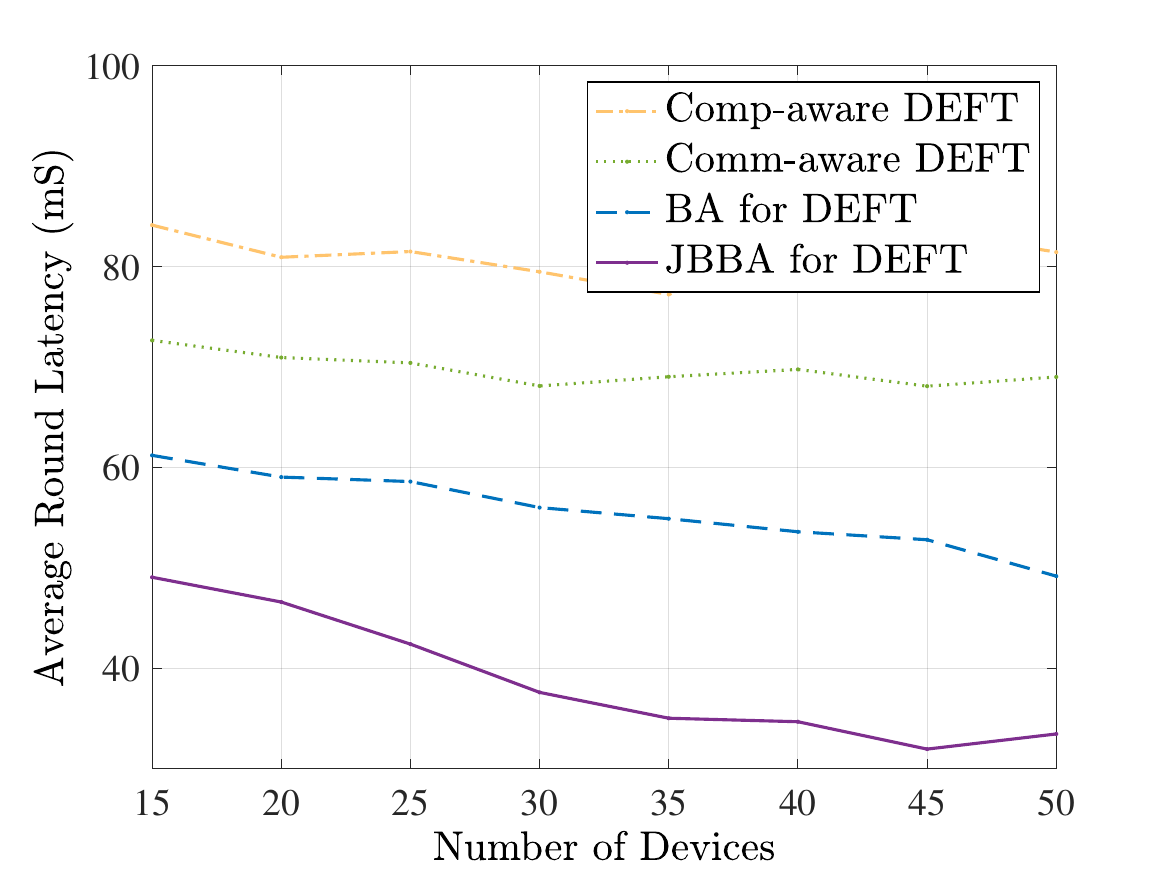}\label{fig: round_t_K_cola}}
    \hspace{2.5cm}
    \subfigure[MRPC]{\includegraphics[width=0.35\textwidth]{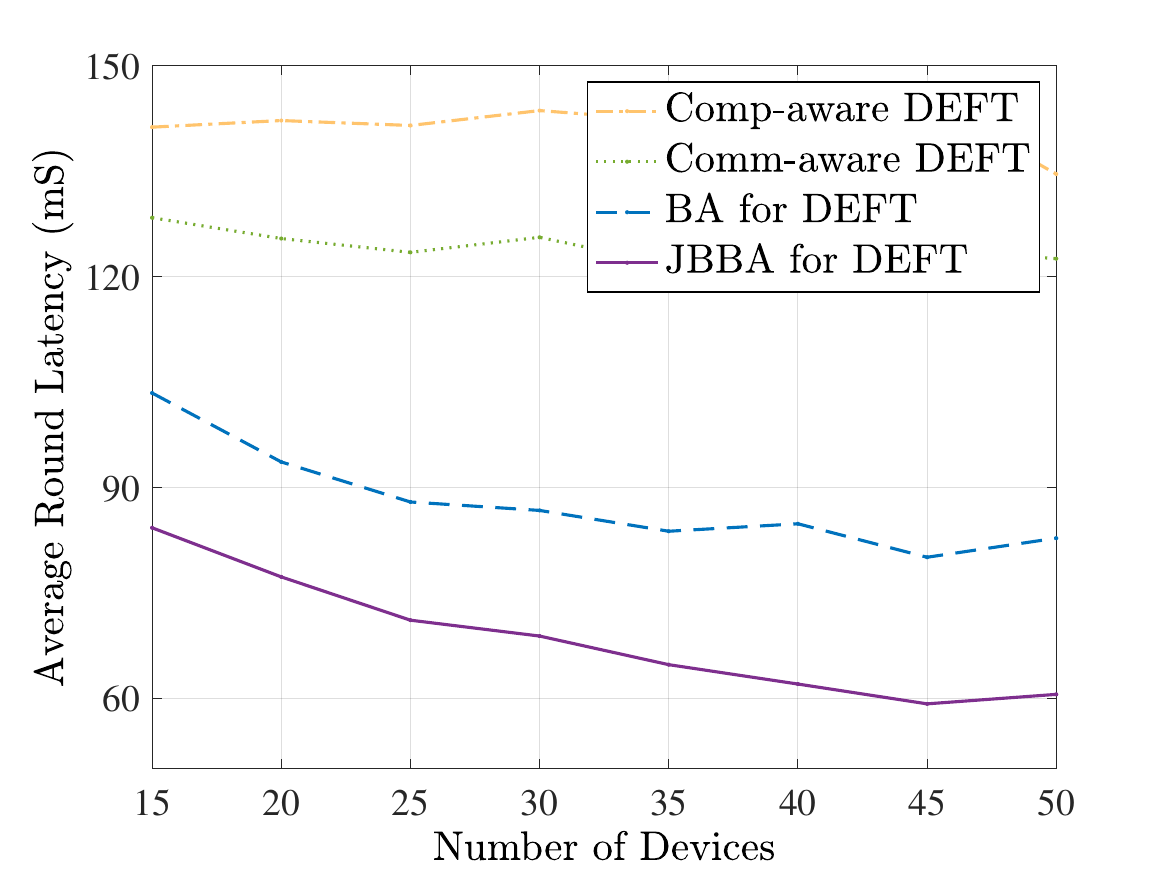}}\label{fig: round_t_K_mrpc}
    \vspace{-2mm}
    \caption{The averaged round latency of multi-device fine-tuning RoBERTa with the different number of devices on the task of (a) CoLA and (b) MRPC, respectively.}
\label{fig: round_t_K}
\vspace{-5mm}
\end{figure*}

\section{Concluding Remarks}
\label{sec: conclusions}

In this work, we consider the DEFT of foundation models at the wireless network edge and propose that multi-device cooperation can transcend the on-device memory limit memory and computation limits of single-device fine-tuning. 
Within the multi-device DEFT paradigm, it is found that the partitioned FoMo blocks residing in the different depths of a FoMo would result in varied computation latency and memory load.
Based on the block heterogeneity, the C$^2$ latency minimization for providing the possibility of fine-tuning large-scale models at the network edge is explored.
Minimizing C$^2$ latency is formulated into depth-aware block allocation problems associated with different bandwidth assignment cases, which are solved optimally with the proposed low-complexity CRUNCH algorithms, respectively. 
By depth-aware block allocation, multi-device DEFT can be accelerated dramatically, unleashing the potential of personalizing more state-of-the-art AI models for intelligent applications.

The exploitation of fine-tuning FoMo's for customizing devices' specifications opens a new direction for designing 6G task-oriented techniques to support distributed training and inference at the network edge.
The large-scale running of FoMo's at the wireless edge necessitates the consideration of energy efficiency, data swap, and user experience.
Specifically, DEFT involves the cooperation of devices and servers, where the coordination of computation offloading, training strategy, and communication setups is necessary.
Moreover, the interaction among devices and servers holding pre-trained FoMo's is worth investigating to employ collective intelligence for better utilization of FoMo's.

% \appendix

\bibliographystyle{IEEEtran}
\bibliography{Ref}

\end{document}